\definecolor{Pink}{rgb}{1.0, 0.5, 0.5}
\definecolor{Maroon}{rgb}{0.8, 0.0, 0.0}
\def\boxit#1{\vbox{\hrule\hbox{\vrule\kern6pt\vbox{\kern6pt#1\kern6pt}\kern6pt\vrule}\hrule}}
\newcommand{\bb}{\mbox{\bf b}}
\newcommand{\be}{\mbox{\bf e}}
\newcommand{\by}{\mbox{\bf y}}
\newcommand{\bA}{\mbox{\bf A}}
\newcommand{\bB}{\mbox{\bf B}}
\newcommand{\bC}{\mbox{\bf C}}
\newcommand{\bD}{\mbox{\bf D}}
\newcommand{\bE}{\mbox{\bf E}}
\newcommand{\bI}{\mbox{\bf I}}
\newcommand{\bL}{\mbox{\bf L}}
\newcommand{\bQ}{\mbox{\bf Q}}
\newcommand{\bR}{\mbox{\bf R}}
\newcommand{\bU}{\mbox{\bf U}}
\newcommand{\bV}{\mbox{\bf V}}
\newcommand{\bX}{\mbox{\bf X}}
\newcommand{\bY}{\mbox{\bf Y}}
\newcommand{\bZ}{\mbox{\bf Z}}
\newcommand{\bmu}{\mbox{\boldmath $\mu$}}
\newcommand{\tr}{\mathrm{tr}}
\def\beqn{\begin{eqnarray}}
\def\eeqn{\end{eqnarray}}
\def\beqns{\begin{eqnarray*}}
\def\eeqns{\end{eqnarray*}}
\def\0{{\bf 0}}
\def\I{{\bf I}}
\def\bQ{{\bf Q}}
\def\U{{\bf U}}
\def\V{{\bf V}}
\def\X{{\bf X}}
\def\Y{{\bf Y}}
\def\1{{\bf 1}}
\def\trans{^{\rm T}}
\newcommand{\real}{\mathbb{R}} 
\newcommand{\beq}{\begin{equation}}
\newcommand{\eeq}{\end{equation}}
\newtheorem{thm}{Theorem}[section]
\newtheorem*{cor*}{Corollary} 
\newcommand{\Fn}{\mathbb{F}} 
\newcommand{\bes}{\begin{eqnarray*}}
\newcommand{\ees}{\end{eqnarray*}}
\newcommand{\bi}{\begin{itemize}}
\newcommand{\ei}{\end{itemize}}
\newcommand{\bTheta}{\boldsymbol{\Theta}}
\newcommand{\bSigma}{\boldsymbol{\Sigma}}
\newcommand{\bLambda}{\boldsymbol{\Lambda}}
\newcommand{\blind}{1}
\begin{document}

\def\spacingset#1{\renewcommand{\baselinestretch}%
{#1}\small\normalsize} \spacingset{1}


\if1\blind
{
\title{Integrative Multi-View Reduced-Rank Regression: Bridging Group-Sparse and Low-Rank Models}

\author[1]{Gen Li}
\author[2]{Xiaokang Liu}
\author[2,3]{Kun Chen\thanks{Corresponding author. \href{mailto:kun.chen@uconn.edu}{kun.chen@uconn.edu}.}}
\affil[1]{Department of Biostatistics, Columbia University}
\affil[2]{Department of Statistics, University of Connecticut, Storrs, CT}
\affil[3]{Center for Population Health, University of Connecticut Health Center, Farmington, CT}
\date{}
  \maketitle
} \fi

\if0\blind
{
  \bigskip
  \bigskip
  \bigskip
  \begin{center}
    {\LARGE Integrative Multi-View Reduced-Rank Regression: Bridging Group Sparsity and Low-Rank Models}
\end{center}
  \medskip
} \fi

\bigskip
\begin{abstract}
Multi-view data have been routinely collected in various fields of science and engineering. A general problem is to study the predictive association between multivariate responses and multi-view predictor sets, all of which can be of high dimensionality. It is likely that only a few views are relevant to prediction, and the predictors within each relevant view contribute to the prediction collectively rather than sparsely. We cast this new problem under the familiar multivariate regression framework and propose an {\em integrative reduced-rank regression} (iRRR), where each view has its own low-rank coefficient matrix. As such, latent features are extracted from each view in a supervised fashion. For model estimation, we develop a convex \textit{composite nuclear norm penalization} approach, which admits an efficient algorithm via alternating direction method of multipliers. Extensions to non-Gaussian and incomplete data are discussed. Theoretically, we derive non-asymptotic oracle bounds of iRRR under a restricted eigenvalue condition. Our results recover oracle bounds of several special cases of iRRR including Lasso, group Lasso and nuclear norm penalized regression. Therefore, iRRR \textit{seamlessly bridges group-sparse and low-rank methods} and can achieve substantially faster convergence rate under realistic settings of multi-view learning. Simulation studies and an application in the Longitudinal Studies of Aging further showcase the efficacy of the proposed methods.\\



\noindent%
{\it Keywords:} composite penalization; group selection; integrative multivariate analysis; multi-view learning; nuclear norm penalization.  
\vfill
\end{abstract}


\spacingset{1.5} 

\pagenumbering{arabic}

\section{Introduction}\label{sec:intro}

Multi-view data, or measurements of several distinct yet interrelated sets of
characteristics pertaining to the same set of subjects, have become increasingly
common in various fields. In a human lung
study, for example, segmental airway tree measurements from CT-scanned images,
patient behavioral data from questionnaires, gene expressions data, together
with multiple pulmonary function test results from spirometry, were all
collected. Unveiling lung disease
mechanisms then amounts to linking the microscopic lung airway structures, the
genetic information, and the patient behaviors to the global measurements of
lung functions \citep{ChenHoffman2016AOAS}. In an Internet network analysis, the popularity and influence of a
web page are related to its layouts, images, texts, and hyperlinks as well as by
the content of other web pages that link back to it. In Longitudinal Study of
Aging (LSOA) \citep{stanziano2010review}, the interest is to predict current health
conditions of patients using historical information of their living conditions,
household structures, habits, activities, medical conditions, among others. The
availability of such multi-view data has made tackling many fundamental problems
possible through an \textit{integrative statistical learning} paradigm, whose success owes to the utilization of
information from various lenses and angles simultaneously. 

The aforementioned problems can all be cast under a multivariate regression framework, in which both the responses and the predictors can be high dimensional, and in addition, the predictors admit some natural grouping structure. In this paper we investigate this simple yet general framework for achieving integrative learning. To formulate, suppose we observe $\bX_k\in \mathbb{R}^{n\times p_k}$ for $k=1,\ldots,K$, each consisting of $n$ copies of independent observations from a set of predictor/feature variables of dimension $p_k$, and also we observe data on $q$ response variables $\Y\in\mathbb{R}^{n\times q}$. Let $\X=(\X_1,\ldots,\X_K)\in\mathbb{R}^{n\times p}$ be the design matrix collecting all the predictor sets/groups, with $p=\sum_{k=1}^{K}p_k$. Both $p$ and $q$ can be much larger than the sample size $n$. Consider the multivariate linear regression model,
\begin{align}
\bY = \bX\bB_0  + \bE = \sum_{k=1}^{K}\bX_k\bB_{0k} + \bE,\label{eq:grmodel}
\end{align}
where $\bB_0=(\bB_{01}\trans,\ldots,\bB_{0K}\trans)\trans\in \mathbb{R}^{p\times q}$ is the unknown regression coefficient matrix partitioned corresponding to the predictor groups, and $\bE$ contains independent random errors with zero mean. For simplicity, we assume both the responses and the predictors are centered so there is no intercept term. 
The naive least squares estimation fails miserably in high dimensions as it leverages neither the response associations nor the grouping of the predictors.




In recent years, we have witnessed an exciting development in regularized estimation, which aims to recover certain parsimonious low dimensional signal from noisy high dimensional data. In the context of multivariate regression or multi-task learning \citep{Caruana1997}, many exploit the idea of sparse estimation \citep{rothman2010sparse,peng2010regularized,lee2012simultaneous,li2015multivariate}, in which information sharing can be achieved by assuming that all the responses are impacted by the same small subset of predictors. When the predictors themselves exhibit a group structure as in model \eqref{eq:grmodel}, a group penalization approach, for example, the convex group Lasso (grLasso) method \citep{yuan2006}, can be readily applied to promote groupwise predictor selection. Such methods have shown to be effective in integrative analysis of high-throughput genomic studies \citep{Ma2011,Liu2012}; a comprehensive review of these methods is provided by \citet{huang2012}.

For multivariate learning, another class of methods, i.e., the reduced-rank methods \citep{anderson1951,reinsel1998}, has also been attractive, where a low-rank constraint on the parameter matrix directly translates to an interpretable latent factor formulation, and conveniently induces information sharing among the regression tasks. \citet{bunea2011optimal} cast the high-dimensional reduced-rank regression (RRR) as a non-convex penalized regression problem with a rank penalty. 
Its convex counterpart is the nuclear norm penalized regression (NNP) \citep{yuan2007,negahban2011,koltchinskii2011},
\begin{align}
\min_{\bB\in \mathbb{R}^{p\times q}}\frac{1}{2n}\|\Y-\X\bB\|_\Fn^2 + \lambda \|\bB\|_\star,\label{eq:NNP}
\end{align}
where $\|\cdot\|_\Fn$ denotes the Frobenius norm, and the nuclear norm is defined as $\|\bB\|_\star = \sum_{j=1}^{p \wedge q}\sigma(\bB,j)$, with $\sigma(\cdot,j)$ denoting the $j$th largest singular value of the enclosed matrix. Other forms of singular value penalization were considered in, e.g., 
\citet{mukherjee2011reduced}, \citet{chen2012ann} and \citet{ZhouLi2014}. In addition, some recent efforts further improve low-rank methods by incorporating error covariance modeling, such as envelope models \citep{cook2015envelopes}, or by utilizing variable selection \citep{chen2012reduced, bunea2012joint, chen2012sparse,su2016sparse}.





In essence, to best predict the multivariate response, sparse methods search for the most relevant subset or groups of predictors, while reduced-rank methods search for the most relevant subspace of the predictors. However, neither class of existing methods can fulfill the needs in the aforementioned multi-view problems. The predictors within each group/view may be strongly correlated, each individual variable may only have weak predictive power, and it is likely that only a few of the views are useful for prediction. Indeed, in the lung study, it is largely the collective effort of the sets of local airway features that drives the global lung functions \citep{ChenHoffman2016AOAS}. In the LSOA study, the predictor groups have distinct interpretations and thus warrant distinct dependence structures with the health outcomes.


In this paper, we propose an {\em integrative multi-view reduced-rank regression} (iRRR) model, where the integration is in terms of multi-view predictors. To be specific, under model \eqref{eq:grmodel}, we assume each set of predictors has its own low-rank coefficient matrix. Figure \ref{fig:irrr} shows a conceptual diagram of our proposed method. Latent features or relevant subspaces are extracted from each predictor set $\X_k$ under the supervision of the multivariate response $\bY$, and the sets of latent variables/subspaces in turn jointly predict $\bY$. The model setting strikes a balance between flexibility and parsimony, as it nicely bridges two seemingly quite different model classes: reduced-rank and group-sparse models. On the one hand, iRRR generalizes the two-set regressor model studied in \citet{velu1991reduced} by allowing multiple sets of predictors, each of which can correspond to a low-rank coefficient matrix. On the other hand, iRRR subsumes group-sparse model setup by allowing the rank of $\bB_{0k}$ being 0, for any $k=1,\ldots, K$, i.e., the coefficient matrix of a predictor group could be entirely zero.


\pgfdeclarelayer{background}
\pgfdeclarelayer{foreground}
\pgfsetlayers{background,main,foreground}
\tikzstyle{xu}=[draw, fill=blue!20, text width=2.5em,
    text centered, minimum height=10em,drop shadow]
\tikzstyle{xv}=[draw, fill=red!20, text width=6em,
    text centered, minimum height=2.5em,drop shadow]
\tikzstyle{ann} = [above, text width=5em, text centered]
\tikzstyle{by} = [xu, text width=6em, fill=red!20,
    minimum height=10em, rounded corners, drop shadow]
\tikzstyle{bx} = [xu, text width=6em, fill=blue!20,
    minimum height=10em, rounded corners, drop shadow]
\tikzstyle{myarrows}=[line width=0.5mm,draw=black,-triangle 45,postaction={draw, line width=1.5mm, shorten >=4mm, -}]
\usetikzlibrary{arrows, decorations.markings}
\tikzstyle{vecArrow} = [thick, decoration={markings,mark=at position
   0.8 with {\arrow[semithick]{open triangle 60}}},
   double distance=1.4pt, shorten >= 12pt,
   preaction = {decorate},
   postaction = {draw,line width=1.4pt, white,shorten >= 11pt}]
\tikzstyle{innerWhite} = [semithick, white,line width=1.4pt, shorten >= 11pt]
\usetikzlibrary{arrows,positioning}
\tikzset{
    >=stealth',
    punkt/.style={
           rectangle,
           rounded corners,
           draw=black, very thick,
           text width=6.5em,
           minimum height=2em,
           text centered},
    pil/.style={
           ->,
           thick,
           shorten <=2pt,
           shorten >=2pt,}
}
\def\blockdist{2}
\def\blockuv{3em}
\def\edgedist{2}
\begin{figure}[htp]
\centering
\begin{tikzpicture}[remember picture,scale=0.6, every node/.style={scale=0.4},every block/.style={scale=0.4}]
    \node at (-1.8,-3) (by) [by]{$\Y$};
    \path (by.east)+(3,0) node (xu1) [xu] {$\X_1\U_1$};
    \path (xu1.east)+(\blockuv,0) node (xv1) [xv] {$\V_1\trans$};
    \path (xv1.east)+(0.4,-0.2) node (dots1)[ann] {$+$};
    \path (dots1.east)+(0,0) node (xu2) [xu] {$\X_2\U_2$};
    \path (xu2.east)+(\blockuv,0) node (xv2) [xv] {$\V_2\trans$};
    \path (xv2.east)+(1,-0.2) node (dots2)[ann] {$+\cdots +$};
    \path (dots2.east)+(0.3,0) node (xuk) [xu] {$\X_K\U_K$};
    \path (xuk.east)+(\blockuv,0) node (xvk) [xv] {$\V_K\trans$};

    \path (xu1.north)+(0,3.6) node (x1) [bx] {$\X_1$};
    \path (xu2.north)+(0,3.6) node (x2) [bx] {$\X_2$};
    \path (xuk.north)+(0,3.6) node (xk) [bx] {$\X_K$};
    \path (x2.east)+(1.5,0) node (dots4) [ann] {$\cdots\cdots$};

\draw [pil,dashed](x1.south)--(xu1.north);
\draw [pil,dashed](x2.south)--(xu2.north);
\draw [pil,dashed](xk.south)--(xuk.north);

\draw [vecArrow](xu1.west)+(-0.6,0)--(by.east);
\draw [innerWhite](xu1.west)+(-0.6,0)--(by.east);

 \draw (by.north)
   edge[pil, red, bend left=15,dashed] (xu1.north) 
   edge[pil, red, bend left=15,dashed] (xu2.north)
   edge[pil, red, bend left=15,dashed] (xuk.north); 

\node [draw=black, fit= (xu1) (xv1) (xuk) (xvk),inner sep=0.25cm,scale=1/0.4] {};

\end{tikzpicture}
\caption{A diagram of integrative multi-view reduced-rank regression (iRRR). Latent features, i.e., $\X_k\U_k$, are learned from each view/predictor set under the supervision of $\Y$.}\label{fig:irrr}
\vspace{-0.35cm}
\end{figure}
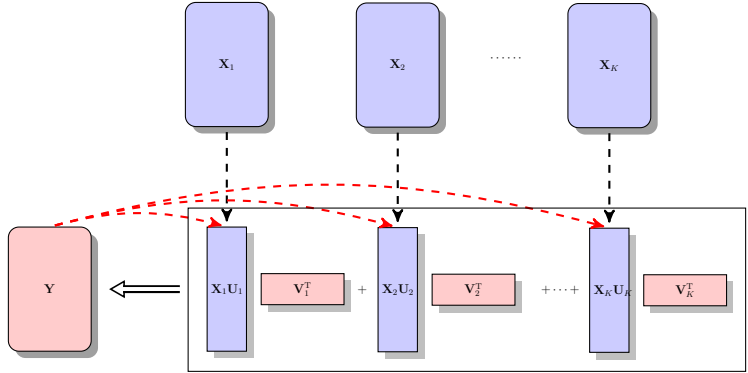

In Section \ref{sec:model}, we develop a new convex optimization approach via composite nuclear norm penalization (cNNP) to conduct model estimation for iRRR, which ensures the scalability to large-scale applications. 
We devise an Alternating Direction Method of Multipliers (ADMM) algorithm to solve the optimization problem with convergence guarantee; extensions to non-Gaussian response, incomplete data, among others, are also considered, and all the details are reported in Section \ref{sec:comp}. In Section \ref{sec:theory}, we derive non-asymptotic oracle bounds for the iRRR estimator, which subsume the results for several existing regularized estimation methods, and show that our proposed approach can achieve superior performance under realistic settings of multi-view learning. Comprehensive simulation studies are contained in Section \ref{sec:sim}, and a real data analysis of the LSOA example is contained in Section \ref{sec:real}. In Section \ref{sec:dis}, we conclude with some discussions.



\section{Integrative Multi-View Reduced-Rank Regression}\label{sec:model}

\subsection{Proposed Model}

We consider the multivariate regression model in \eqref{eq:grmodel} to pursue integrative learning. Recall that in model \eqref{eq:grmodel}, there are $K$ views or groups of predictors denoted by $\bX=(\bX_1,\ldots,\bX_K)$, where $\bX_k\in\real^{n \times p_k}$ and $\sum_{k=1}^K p_k=p$. Correspondingly, the coefficient matrix $\bB_0$ is partitioned into $K$ parts as $\bB_0=(\bB_{01}\trans,\ldots,\bB_{0K}\trans)\trans$, where $\bB_{0k} \in \real^{p_k\times q}$. Denote $r(\cdot)$ as the rank of the enclosed matrix. By assuming each $\bB_{0k}$ is possibly of low rank or even a zero matrix, i.e., $0\leq r_{0k}\ll p_k \wedge q$ where $r_{0k} =r(\bB_{0k})$, for $k=1,\ldots,K$, we reach our proposed {\em integrative multi-view reduced-rank regression} (iRRR) model.





The groupwise low-rank structure in iRRR is distinct from a globally low-rank structure for $\bB_0$ in standard RRR models. The low-rankness of $\bB_{0k}$s does not necessarily imply that $\bB_0$ is of low rank. Conversely, if $\bB_0$ is of low rank, i.e., $r_0=r(\bB_0)\ll p \wedge q$, all we know is that the rank of each $\bB_{0k}$ is upper bounded by $r_0$. 

Nevertheless, we can first attempt an intuitive understanding of the potential parsimony of iRRR in multi-view settings. The numbers of free parameters in $\bB_0$ (the naive degrees of freedom) for an iRRR model, a globally reduced-rank model and a group-sparse model are $\mbox{df}_1 = \sum_{k=1}^K(p_k+q-r_{0k})r_{0k}$, $\mbox{df}_2 = (p+q-r_0)r_0$ and $\mbox{df}_3= \sum_{k=1}^Kp_kqI(r_{0k} \neq 0)$, respectively, where $I(\cdot)$ is an indicator function. For high-dimensional multi-view data, consider the scenario that only a few views/predictor groups impact the prediction in a collective way, i.e., $r_{0k}$s are mostly zero, and each nonzero $r_{0k}$ could be much smaller than $(p_k \wedge q)$. Then $\mbox{df}_1$ could be substantially smaller than both $\mbox{df}_2$ and $\mbox{df}_3$. For example, if $r_{01}>0$ while $r_{0k}=0$ for any $k>1$ (i.e., $r_0=r_{01}$), we have $\mbox{df}_1 = (p_1+q-r_{01})r_{01}$, $\mbox{df}_2 = (p+q-r_{01})r_{01}$ and $\mbox{df}_3=p_1q$, respectively. Another example is when $r_0 = \sum_{k=1}^{K}r_{0k}$, e.g., $\bB_{0k}$s in model \eqref{eq:grmodel} have distinct row spaces. Since $\sum_{k=1}^K (p_k+q-r_{0k})r_{0k}\leq \{q+\sum_{k=1}^K(p_k-r_{0k})\}\{\sum_{k=1}^K r_{0k}\}=(p+q-r_0)r_0$, iRRR is more parsimonious than the globally reduced-rank model. The above observations will be rigorously justified in Section \ref{sec:theory} through a non-asymptotic analysis.

\subsection{Composite Nuclear Norm Penalization}

To recover the desired view-specific low-rank structure in the iRRR model, we propose a convex optimization approach with \textit{composite nuclear norm penalization} (cNNP),
\begin{align}
\widehat{\bB} \in \arg\min_{\bB \in \mathbb{R}^{p \times q}} \frac{1}{2n}\|\bY - \bX\bB\|_{\Fn}^2 + \lambda \sum_{k=1}^{K}w_k \|\bB_k\|_\star, \label{eq:iRRR}
\end{align}
where $\|\bB_k\|_\star = \sum_{j=1}^{p_k \wedge q}\sigma(\bB_k,j)$ is the nuclear norm of $\bB_k$, $w_k$s are some prespecified weights, and $\lambda$ is a tuning parameter controlling the amount of regularization. The use of the weights is to adjust for the dimension and scale differences of $\bX_k$s. We choose
\begin{align}
w_k = \sigma(\bX_k,1)\{\sqrt{q} + \sqrt{r(\bX_k)}\}/n,\label{eq:weight}
\end{align}
based on a concentration inequality of the largest singular value of a Gaussian matrix. This choice balances the penalization of different views and allows us to use only a single tuning parameter to achieve desired statistical performance; see Section \ref{sec:theory} for details.  

Through cNNP, the proposed approach can achieve view selection and view-specific subspace selection simultaneously, which shares the same spirit as the bi-level selection methods for univariate regression \citep{breheny2009,huang2012,ChenHoffman2016AOAS}. Moreover, iRRR seamlessly bridges group-sparse and low-rank methods as its special cases. 

\noindent {\bf Case 1}: \textit{nuclear norm penalized regression (NNP)}. When $p_1=p$ and $K=1$, \eqref{eq:iRRR} reduces to the NNP method as in \eqref{eq:NNP}, which learns a globally low-rank association structure.

\noindent {\bf Case 2}: \textit{multi-task learning (MTL)}. When $p_k=1$ and $p=K$, \eqref{eq:iRRR} becomes a special case of MTL \citep{Caruana1997}, in which all the tasks are with the same set of features and the same set of samples. MTL achieves integrative learning by exploiting potential information sharing across the tasks, i.e., all the task models share the same sparsity pattern of the features.


\noindent {\bf Case 3}: \textit{Lasso and grLasso}. When $q=1$, \eqref{eq:iRRR} becomes a grLasso method, as $\|\bB_k\|_\star = \|\bB_k\|_2$ when $\bB_k\in \mathbb{R}^{p_k}$. Further, when $p_k =1$ and $p=K$, \eqref{eq:iRRR} reduces to a Lasso regression.

\section{Computation and Extensions}\label{sec:comp}

\subsection{ADMM for iRRR}


Without loss of generality, we omit the weights $w_k$ ($k=1,\cdots,K$) defined in \eqref{eq:weight} in the following derivation of the computational algorithm (since we can reparameterize $\bX_k$ by $(1/w_k)\bX_k$ and  $w_k\bB_k$ by $\bB_k$ to get an equivalent unweighted form of the objective function).
The convex optimization has no closed-form solution, for which we propose an ADMM algorithm \citep{boyd2011distributed}.
More specifically, let $\bA_k$ ($k=1,\cdots,K$) be a set of surrogate variables for $\bB_k$ with the same dimensions and $\bA=(\bA_1\trans,\cdots,\bA_K\trans)\trans$.
The original optimization is equivalent to
\bes
\begin{aligned}
\min_{\bA_k,\bB_k}&\quad {1\over 2n}\|\bY - \sum_{k=1}^K \bX_k\bB_k\|^2_\Fn +
  \lambda\sum_{k=1}^K\|\bA_k\|_\star, \qquad s.t.\quad \bA_k=\bB_k, \quad k=1,\cdots,K.
\end{aligned}
\ees
Let $\bLambda_k$ ($k=1,\cdots,K$) be a set of Lagrange multipliers with the same dimensions as $\bA_k$ and $\bB_k$, and $\bLambda=(\bLambda_1\trans,\cdots,\bLambda_K\trans)\trans$.
The augmented Lagrangian objective function is
\begin{eqnarray}\label{obj}
\begin{aligned}
\mathcal{D}(\bY;\bA,\bB, \bLambda)=&{1\over2n}\|\bY-\sum_{k=1}^K \bX_k\bB_k\|^2_\Fn+ \lambda\sum_{k=1}^K\|\bA_k\|_\star \\
&+\ \sum_{k=1}^K\langle\bLambda_k,\bA_k-\bB_k\rangle_\Fn +\ {\rho\over2}\sum_{k=1}^K\|\bA_k-\bB_k\|^2_\Fn,
\end{aligned}
\end{eqnarray}
where $\langle\bQ,\bR\rangle_\Fn$ represents the Frobenius inner product of $\bQ$ and $\bR$, which equals to the trace of $\bQ\trans\bR$.
The last squared Frobenius term is the augmentation term, with $\rho$ being a prespecified step size (usually set to be a small positive value, e.g., 0.1).

The ADMM algorithm alternates between two steps, a {\em primal} step and a {\em dual} step, until convergence. The primal step minimizes $\mathcal{D}(\bY;\bA,\bB,\bLambda)$ with respect to $\bA$ and $\bB$, respectively, while fixing everything else; the {\em dual} step updates $\bLambda$.

\vskip.1in

{\noindent\bf Primal step:}
We minimize \eqref{obj} with respect to $\bA$ and $\bB$, separately.
In particular, when one is fixed, the optimization with respect to the other has an explicit solution.
More specifically, let $\widetilde{\bA}$, $\widetilde{\bB}$, and $\widetilde{\bLambda}$ represent the estimates from the previous iteration.
The optimization $\min_{\bB}\mathcal{D}(\bY;\widetilde{\bA},\bB,\widetilde{\bLambda})$ has a unique solution
\beq\label{B}
\widehat{\bB}=\left({1\over n}\bX\trans\bX+\rho\bI\right)^{-1}\left({1\over n}\bX\trans\bY+\rho\widetilde{\bA}+\widetilde{\bLambda}\right).
\eeq
Subsequently, we can obtain the estimate of $\bB_k$ (i.e.,  $\widehat{\bB}_k$) by partitioning $\widehat{\bB}$.

To estimate $\bA$, the objective function $\mathcal{D}(\bY;\bA,\widehat{\bB}; \widetilde{\bLambda})$ is readily separable for different $\bA_k$s.
In particular, each subproblem is rewritten as
\begin{align}
\min_{\bA_k} \quad {\rho\over 2} \|\bA_k-\widehat{\bB}_k+{\widetilde{\bLambda}_k\over\rho}\|^2_\Fn +\lambda\|\bA_k\|_\star,\label{eq:Astep}
\end{align}
which can be solved via the singular value soft-thresholding technique \citep{cai2010singular}.
To be specific, let $\bU_k\bD_k\bV_k\trans$ be the singular value decomposition (SVD) of $\widehat{\bB}_k-\widetilde{\bLambda}_k/\rho$, where $\bU_k$ and $\bV_k$ have orthonormal columns and $\bD_k$ contains non-increasing singular values.
The solution to the optimization problem in \eqref{eq:Astep} is
\beq\label{A}
\widehat{\bA}_k=\bU_k\mathcal{S}(\bD_k,{\lambda\over\rho})\bV_k\trans,
\eeq
where $\mathcal{S}(\bD_k,\lambda/\rho)=(\bD_k-\lambda/\rho)_+$ applies soft-thresholding at the level $\lambda/\rho$ to each entry of $\bD_k$.
As a result, $\widehat{\bA}_k$ may be low-rank.

\vskip.1in
{\noindent\bf Dual step:}
Once $\bA$ and $\bB$ are estimated, the Lagrange multipliers $\bLambda_k$ are updated by
\beq\label{dual}
\widehat{\bLambda}_k=\widetilde{\bLambda}_k+\rho(\widehat{\bA}_k-\widehat{\bB}_k).
\eeq


\vskip.1in
{\noindent\bf Stopping criterion:}
The ADMM algorithm alternates between the primal step and the dual step. After each iteration, we evaluate the primal and dual residuals as
\beq\label{res}
r_{primal}=\|\widehat{\bA}-\widehat{\bB}\|_\Fn,\quad r_{dual}=\rho\|\widehat{\bB}-\widetilde{\bB}\|_\Fn.
\eeq
Following \cite{boyd2011distributed}, the stopping criterion is that both residuals fall below a small prefixed threshold. It can be proved that under weak regularity conditions, the algorithm always converges to a global optimum. In practice, one can let the step size $\rho$ vary over iterations, and generally the convergence is expedited with a slowly increasing sequence of $\rho$ \citep{he2000alternating}. 
A summary of the above algorithm for solving iRRR with a fixed $\lambda$ is provided in Algorithm \ref{alg1} below. 

\begin{algorithm}[h]
\caption{ADMM algorithm for fitting iRRR}\label{alg1}
\begin{algorithmic}
\State Parameter: $\lambda$, $\rho$.
\State Initialize $\bA$, $\bB$ and the Lagrange multiplier $\bLambda$;
\While {The stopping criterion is not satisfied}
\bi
\item Primal step: update $\bB$ by \eqref{B} and update $\bA$ by \eqref{A};
\item Dual step: update $\bLambda$ by \eqref{dual};
\item Calculate the primal and dual residuals in \eqref{res};
\item (Optional) Increase $\rho$ by a small amount, e.g., $\rho \leftarrow 1.1\rho$.
\ei
\EndWhile
\end{algorithmic}
\end{algorithm}


The tuning parameter $\lambda$ in \eqref{eq:iRRR} balances the loss function and the penalty term. In practice, the model is fitted using the ADMM algorithm for a sequence of $\lambda$ values to produce a spectrum of view-specific low-rank models. A warm start strategy is adopted to speed up computation, i.e., the current solution is used as the initial value for the next $\lambda$ value. We use K-fold cross validation \citep{stone1974} to choose the optimal $\lambda$ and hence the optimal solution, based on the predictive performance of the models.


\subsection{Handling Non-Gaussian and Incomplete Response}


When the responses are non-Gaussian, we substitute the squared loss function in \eqref{eq:iRRR} with the negative log likelihood denoted as $-\log L(\bY,\bTheta)$. The augmented Lagrangian becomes
\bes
\mathcal{D}(\bY;\bmu,\bA,\bB, \bLambda)=-{1\over n}\log L(\bY,\bTheta)+ \lambda\sum_{k=1}^K\|\bA_k\|_\star
+ \langle\bLambda,\bA-\bB\rangle_\Fn + {\rho\over2}\|\bA-\bB\|^2_\Fn,
\ees
where $\bTheta=\1\bmu\trans+\bX\bB$.
The minimization of $\mathcal{D}(\bY;\bmu,\bA,\bB, \bLambda)$ with respect to $\bmu$ and $\bB$ while fixing everything else may no longer have closed-form solutions.
To alleviate the computational burden, one could apply a quadratic approximation or majorization to the negative log likelihood function in the primal step, and then follow the ADMM algorithm for parameter estimation.
In the following, we demonstrate the estimation procedure for binary responses.

The log-likelihood function for binary responses $\bY$ can be expressed as
\beq\label{binary}
\log L(\bY,\bTheta)=\sum_{i=1}^n\sum_{j=1}^q \log h\left((2y_{ij}-1)\theta_{ij}\right),
\eeq
where $\theta_{ij}$ is the $(i,j)$th entry of $\bTheta$ and $h(\eta)=\exp(\eta)/\{1+\exp(\eta)\}$ denotes the inverse function of the logit link function.
Following \cite{lee2010sparse} and \cite{lee2013coordinate}, we have the following relation
\beq\label{major}
-\log h(\eta)\leq -\log h(\eta_0)- 2\{1-h(\eta_0)\}^2+{1\over 8}\left[\eta-\eta_0-4\{1-h(\eta_0)\}\right]^2.
\eeq
Namely, $-\log h(\eta)$ is majorized by the quadratic function on the right-hand side, which is tangent with $-\log h(\eta)$ at $\eta_0$ and has a fixed second-order derivative.

Let $\widetilde{\theta}_{ij}$ be the estimate from the previous iteration.
By applying \eqref{major} to \eqref{binary}, we have
\[
-\log L(\bY,\bTheta)\leq {1\over8}\sum_{i=1}^n\sum_{j=1}^q\left[(2y_{ij}-1)(\theta_{ij}-\widetilde{\theta}_{ij})
-4\left\{1-h\left((2y_{ij}-1)\widetilde{\theta}_{ij}\right)\right\}\right]^2 + c,
\]
where $c$ is some constant.
Let $\bY^\star$  be an $n\times q$ working response matrix with the $(i,j)$th entry
\bes
y_{ij}^\star=\widetilde{\theta}_{ij}+4(2y_{ij}-1)\left\{1-h\left((2y_{ij}-1)\widetilde{\theta}_{ij}\right)\right\}.
\ees
Correspondingly, the negative log likelihood function $-\log L(\bY,\bTheta)$ is majorized by the squared function
${1/8}\|\bY^\star-\bTheta\|^2_\Fn$, plus some constant.
Consequently, in the primal step, one could minimize the majorized objective function to estimate $\bmu$ and $\bB$ explicitly. In particular, the estimate of $\bmu$ is $({1/n}){\bY^\star}\trans\1$.
We remark that in practice, it generally suffices to run the majorization-minimization procedure once in each ADMM iteration \citep{he2002new}.
When there are missing values in the responses, we exploit a similar idea to majorize the objective function in each ADMM iteration.
More specifically, suppose $\mathcal{O}\subseteq\{(i,j): i=1,\cdots,n; j=1,\cdots,q\}$ is the index set for observed data points, and $\mathcal{M}\subseteq\{(i,j): i=1,\cdots,n; j=1,\cdots,q\}$ is the index set for missing values.
For Gaussian data, we  majorize the observed loss function $\sum_{(i,j)\in\mathcal{O}}(y_{ij}-\theta_{ij})^2$ by $\sum_{(i,j)\in\mathcal{O}}(y_{ij}-\theta_{ij})^2+\sum_{(i,j)\in\mathcal{M}}(\widetilde{\theta}_{ij}-\theta_{ij})^2$;
for binary responses, we first majorize the negative log likelihood function by $1/8\sum_{(i,j)\in\mathcal{O}}(y^\star_{ij}-\theta_{ij})^2$ as before, and then further majorize it as
$1/8\sum_{(i,j)\in\mathcal{O}}(y^\star_{ij}-\theta_{ij})^2+1/8\sum_{(i,j)\in\mathcal{M}}(\widetilde{\theta}_{ij}-\theta_{ij})^2$.
By collecting $y_{ij}$ or $y^\star_{ij}$ and $\widetilde{\theta}_{ij}$ in an $n\times p$ matrix, we obtain a matrix-form loss function as before.
As a result, we use the same ADMM steps to estimate the parameters.

\subsection{On $\ell_2$ Regularization and Adaptive Estimation}

To better deal with high dimensional data, we can consider adding a ridge penalty $\lambda_2 \|\bB\|_{\Fn}^2$ to the cNNP penalty in \eqref{eq:iRRR} \citep{mukherjee2011reduced, chen2012ann}.
As a result, the objective function becomes strictly convex whenever the tuning parameter $\lambda_2 > 0$. This shares the same idea as the elastic net \citep{zou2005}, and ensures that the problem has a unique global optimizer.


With the combined penalty form $\lambda \sum_{k=1}^{K}\|\bB_k\|_\star + \lambda_2 \|\bB\|_{\Fn}^2$, the iRRR problem can be easily transformed to the same form as before:
\[
{1\over2n}\left\|\begin{pmatrix}\bY\\ \0\end{pmatrix}-\begin{pmatrix} \bX\\ \sqrt{2n\lambda_2}\bI\end{pmatrix}\bB\right\|^2_\Fn+ \lambda\sum_{k=1}^K\|\bB_k\|_\star,
\]
where $\0$ is a zero matrix of size $p\times q$ and $\bI$ is an identity matrix of size $p\times p$. (More generally the identity matrix can be replaced by a diagonal matrix to allow weighted $\ell_2$ regularization). Upon defining $\bY^\dag=(\bY\trans,\0)\trans$ and $\bX^\dag=(\bX\trans,\sqrt{2n\lambda_2}\bI)\trans$ as augmented responses and predictors, the model estimation could be conducted directly by applying Algorithm \ref{alg1} to the augmented data. Alternatively, a more computationally efficient way is to directly modify the ADMM algorithm by replacing the nuclear norm penalty in \eqref{eq:Astep} by a combined nuclear  and squared $\ell_2$ norm penalty. The resulting problem can still be solved explicitly, now via a singular value shrinkage and thresholding operation \citep{Sun2012}. 

When the ridge penalty is included, we have an additional tuning parameter $\lambda_2$. A larger value of $\lambda_2$ makes the problem more convex, but meanwhile introduces more bias to the final estimates. In practice, $\lambda_2$ can be selected using CV as well. However, empirical experiments suggest that it usually suffices to set $\lambda_2$ at a very small value without tuning it.
For simplicity, we omit the ridge penalty term in our numerical studies.

Moreover, motivated by \citet{zou2006adaptive}, we can consider an adaptively weighted version of iRRR, where, for example, we first fit iRRR and then adjust the weights according to the estimated coefficient sub-matrices (e.g., factoring in the inverse of the Frobenius norms of the estimated coefficient matrices). This may potential improve view selection and predictive accuracy, as shown in the numerical studies in Section \ref{sec:sim}.

\section{Theoretical Analysis}\label{sec:theory}

We investigate the theoretical properties of the proposed iRRR estimator from solving the convex cNNP problem. In particular, we derive its non-asymptotic performance bounds for estimation and prediction. Our general results recover performance bounds of several related methods, including Lasso, grLasso and NNP. We further show that iRRR is capable of substantially outperforming those methods under realistic settings of multi-view learning. All the proofs are provided in Appendix \ref{sec:appendix:th}.

We mainly consider the multi-view regression model in \eqref{eq:grmodel}, i.e.,
$\bY = \sum_{k=1}^{K}\bX_k\bB_{0k} + \bE$, and the iRRR estimator in \eqref{eq:iRRR} with the weights defined in \eqref{eq:weight}, i.e.,
\begin{align*}
\widehat{\bB} \in \arg\min_{\bB\in \mathbb{R}^{p\times q}}\ \frac{1}{2n}\|\bY - \bX\bB\|_\Fn^2 + \lambda \sum_{k=1}^{K}\sigma(\bX_k,1)\left\{\sqrt{q} + \sqrt{r(\bX_k)}\right\} \|\bB_k\|_\star/n.
\end{align*}
Define $\bZ = \bX\trans\bX/n$, and $\bZ_k = \bX_k\trans\bX_k/n$, for $k=1,\ldots,K$. We scale the columns of $\bX$ such that the diagonal elements of $\bZ$ all equal to 1. Denote $\Lambda(\bZ,l)$ as the $l$th largest eigenvalue of $\bZ$, so that $\Lambda(\bZ,l) = \sigma(\bX,l)^2/n$.

\begin{thm}\label{th:1}
Assume $\bE$ has independent and identically distributed (i.i.d.) $\mbox{N}(0,\tau^2)$ entries. Let $\lambda = (1+\theta)\tau$, with $\theta >0$ arbitrary. Then with probability at least
$1 - \sum_{k=1}^{K}\exp[-\theta^2\{q+r(\bX_k)\}/2]$,
we have
\begin{align*}
\|\bX\widehat{\bB} - \bX\bB_0\|_\Fn^2
\leq \|\bX\bC - \bX\bB_0\|_\Fn^2 + 4\lambda \sum_{k=1}^{K}\sigma(\bX_k,1)\left\{\sqrt{q}+\sqrt{r(\bX_k)}\right\}\|\bC_k\|_\star,
\end{align*}
for any $\bC_k\in \mathbb{R}^{p_k\times q}$, $k=1,\ldots,K$ and $\bC=(\bC_1\trans,\ldots, \bC_K\trans)\trans$.
\end{thm}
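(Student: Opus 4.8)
The plan is to establish the result by the standard \emph{basic inequality} argument for penalized $M$-estimators, where the only problem-specific work lies in controlling the stochastic term through the view-wise weights $w_k = \sigma(\bX_k,1)\{\sqrt{q}+\sqrt{r(\bX_k)}\}/n$. First I would exploit the optimality of $\widehat{\bB}$: since it minimizes the cNNP objective, comparing the objective at $\widehat{\bB}$ with that at an arbitrary competitor $\bC$ gives
\begin{align*}
\frac{1}{2n}\|\bY - \bX\widehat{\bB}\|_\Fn^2 + \lambda\sum_{k=1}^K w_k\|\widehat{\bB}_k\|_\star \le \frac{1}{2n}\|\bY - \bX\bC\|_\Fn^2 + \lambda\sum_{k=1}^K w_k\|\bC_k\|_\star.
\end{align*}
Substituting the model $\bY = \bX\bB_0 + \bE$, expanding both Frobenius norms, cancelling the common $\|\bE\|_\Fn^2$ term, and multiplying through by $2n$, I obtain
\begin{align*}
\|\bX(\widehat{\bB}-\bB_0)\|_\Fn^2 \le \|\bX(\bC-\bB_0)\|_\Fn^2 + 2\sum_{k=1}^K\langle \bX_k\trans\bE,\, \widehat{\bB}_k-\bC_k\rangle_\Fn + 2n\lambda\sum_{k=1}^K w_k\big(\|\bC_k\|_\star - \|\widehat{\bB}_k\|_\star\big),
\end{align*}
where I have used $\bX(\widehat{\bB}-\bC) = \sum_k \bX_k(\widehat{\bB}_k - \bC_k)$ together with the adjoint identity $\langle \bE, \bX_k\bM\rangle_\Fn = \langle \bX_k\trans \bE, \bM\rangle_\Fn$.

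Next I would bound the stochastic cross term by trace duality between the nuclear and spectral norms, $|\langle \bA,\bB\rangle_\Fn| \le \sigma(\bA,1)\|\bB\|_\star$, followed by the triangle inequality $\|\widehat{\bB}_k-\bC_k\|_\star \le \|\widehat{\bB}_k\|_\star + \|\bC_k\|_\star$, yielding
\begin{align*}
2\langle \bX_k\trans\bE,\, \widehat{\bB}_k-\bC_k\rangle_\Fn \le 2\,\sigma(\bX_k\trans\bE,1)\big(\|\widehat{\bB}_k\|_\star + \|\bC_k\|_\star\big).
\end{align*}
The entire argument then hinges on showing that, with the stated probability, $\sigma(\bX_k\trans\bE,1) \le n\lambda w_k = \lambda\,\sigma(\bX_k,1)\{\sqrt{q}+\sqrt{r(\bX_k)}\}$ simultaneously for all $k$. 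On this event the $\|\widehat{\bB}_k\|_\star$ contributions are dominated by the negative penalty terms and the $\|\bC_k\|_\star$ contributions combine to exactly $4n\lambda\sum_k w_k\|\bC_k\|_\star = 4\lambda\sum_k \sigma(\bX_k,1)\{\sqrt{q}+\sqrt{r(\bX_k)}\}\|\bC_k\|_\star$, which is the claimed bound.

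The main obstacle, and the only genuinely probabilistic step, is the concentration bound on $\sigma(\bX_k\trans\bE,1)$; this is precisely what motivates the form of $w_k$. I would take the reduced SVD $\bX_k = \bU_k\bD_k\bV_k\trans$ with $\bU_k\in\real^{n\times r(\bX_k)}$ and $\bV_k$ having orthonormal columns. Since left multiplication by $\bV_k$ is an isometry and the operator norm is submultiplicative,
\begin{align*}
\sigma(\bX_k\trans\bE,1) = \sigma(\bV_k\bD_k\bU_k\trans\bE,1) = \sigma(\bD_k\bU_k\trans\bE,1) \le \sigma(\bX_k,1)\,\sigma(\bU_k\trans\bE,1).
\end{align*}
The key observation is that because $\bU_k$ has orthonormal columns and $\bE$ has i.i.d.\ $\mbox{N}(0,\tau^2)$ entries, the matrix $\bU_k\trans\bE \in \real^{r(\bX_k)\times q}$ again has i.i.d.\ $\mbox{N}(0,\tau^2)$ entries. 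I can then invoke the Gaussian concentration inequality for the largest singular value of a standard Gaussian matrix: its expectation is at most $\sqrt{r(\bX_k)} + \sqrt{q}$, and since $\bG\mapsto\sigma(\bG,1)$ is $1$-Lipschitz, $\sigma(\bU_k\trans\bE,1)/\tau \le \sqrt{r(\bX_k)} + \sqrt{q} + t$ with probability at least $1-\exp(-t^2/2)$.

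Finally I would choose $t = \theta\{\sqrt{q}+\sqrt{r(\bX_k)}\}$, so that $\tau\{\sqrt{q}+\sqrt{r(\bX_k)}+t\} = (1+\theta)\tau\{\sqrt{q}+\sqrt{r(\bX_k)}\} = \lambda\{\sqrt{q}+\sqrt{r(\bX_k)}\}$, delivering the required inequality $\sigma(\bX_k\trans\bE,1)\le \lambda\,\sigma(\bX_k,1)\{\sqrt{q}+\sqrt{r(\bX_k)}\}$ off a set of probability at most $\exp(-t^2/2)$. Using $\{\sqrt{q}+\sqrt{r(\bX_k)}\}^2 \ge q + r(\bX_k)$ bounds this failure probability by $\exp[-\theta^2\{q+r(\bX_k)\}/2]$, and a union bound over $k=1,\ldots,K$ produces the overall probability $1-\sum_{k=1}^K\exp[-\theta^2\{q+r(\bX_k)\}/2]$ on which the deterministic computation holds. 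The only steps demanding care are verifying the Gaussianity of $\bU_k\trans\bE$ and matching the tail exponent to the weight choice; the remainder is routine linear algebra and convex-analytic duality.
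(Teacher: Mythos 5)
Your proposal is correct and follows essentially the same route as the paper's proof: the basic inequality from optimality of $\widehat{\bB}$, trace duality plus the triangle inequality on each view's cross term, the event $\{\sigma(\bX_k\trans\bE,1)\le\lambda\sigma(\bX_k,1)(\sqrt{q}+\sqrt{r(\bX_k)})\}$, and a union bound. The only cosmetic difference is that you derive the singular-value concentration for $\bU_k\trans\bE$ directly from Gordon's expectation bound and Gaussian Lipschitz concentration, whereas the paper cites Lemma 3 of Bunea et al.\ (2011) applied to $\mathcal{P}_k\bE$, which has the same top singular value.
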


Theorem \ref{th:1} shows that $\widehat{\bB}$ balances the bias term $\|\bX\bC - \bX\bB_0\|_\Fn^2$ and the variance term $4\lambda \sum_{k=1}^{K}\sigma(\bX_k,1)\{\sqrt{q}+\sqrt{r(\bX_k)}\}\|\bC_k\|_\star$. An oracle inequality for $\widehat{\bB}$ is then readily obtained for the low-dimensional scenario $\sigma(\bX, p) > 0$; see the corollary in Appendix \ref{sec:supp:th1}.



We now investigate the general high-dimensional scenario. Motivated by \citet{lounici2011}, \citet{negahban2011}, \citet{koltchinskii2011}, among others, we impose a restricted eigenvalue condition (RE). We say that $\bX$ satisfies RE condition over a restricted set $\mathcal{C}(r_1,\ldots, r_K;\delta) \subset \mathbb{R}^{p\times q}$ if there exists some constant $\kappa(\bX) > 0 $ such that
\begin{align*}
\frac{1}{2n}\|\bX\Delta\|_\Fn^2 \geq \kappa(\bX)\|\Delta\|_\Fn^2, \qquad \mbox{ for all } \Delta \in \mathcal{C}(r_1,\ldots, r_K;\delta).
\end{align*}
Here each $r_k$ is an integer satisfying $1\leq r_k \leq \min(p_k,q)$ and $\delta$ is a tolerance parameter. The technical details on the construction of the restricted set is provided in Appendix \ref{sec:th:rec}.



\begin{thm}\label{th:2}
Assume that $\bE$ has i.i.d.\ $\mbox{N}(0,\tau^2)$ entries. Suppose $\bX$ satisfies the RE condition with parameter $\kappa(\bX) >0$ over the set $\mathcal{C}(r_1,\ldots, r_K;\delta)$. Let $\lambda=2(1+\theta) \tau$ with $\theta>0$ arbitrary. Then with probability at least
$1 - \sum_{k=1}^{K}\exp[-\theta^2\{q+r(\bX_k)\}/2]$, 
\begin{align*}
\|\widehat{\bB} - \bB_{0}\|_\Fn^2  \preceq  \max & \left\{  \delta^2, \tau^2(1+\theta)^2\sum_{k=1}^{K}\frac{\Lambda(\bZ_k,1)}{\kappa(\bX)^2}\frac{\{\sqrt{q} + \sqrt{r(\bX_k)}\}^2r_k}{n}, \right.\\
& \left.\tau(1+\theta)\sum_{k=1}^{K}\frac{\sqrt{\Lambda(\bZ_k,1)}}{\kappa(\bX)}\frac{\{\sqrt{q} + \sqrt{r(\bX_k)}\}\{\sum_{j=r_k+1}^{m_k}\sigma(\bB_{0k},j)\}}{\sqrt{n}}
\right\}.
\end{align*}
\end{thm}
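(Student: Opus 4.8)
The plan is to follow the standard oracle-inequality route for penalized $M$-estimators, adapted to the composite nuclear-norm penalty, while tracking constants so that the weights $w_k=\sigma(\bX_k,1)\{\sqrt q+\sqrt{r(\bX_k)}\}/n$ precisely absorb the noise. Write $\Delta=\widehat{\bB}-\bB_0$ and $\Delta_k=\widehat{\bB}_k-\bB_{0k}$. First I would use optimality of $\widehat{\bB}$ (the basic inequality) together with $\bY=\bX\bB_0+\bE$ to obtain
\begin{align*}
\frac{1}{2n}\|\bX\Delta\|_\Fn^2 \le \frac{1}{n}\sum_{k=1}^K\langle \bX_k\trans\bE,\Delta_k\rangle_\Fn + \lambda\sum_{k=1}^K w_k\bigl(\|\bB_{0k}\|_\star-\|\widehat{\bB}_k\|_\star\bigr).
\end{align*}
By trace duality $\langle \bX_k\trans\bE,\Delta_k\rangle_\Fn\le \sigma(\bX_k\trans\bE,1)\,\|\Delta_k\|_\star$, so everything reduces to controlling the view-wise operator norms $\sigma(\bX_k\trans\bE,1)$.

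Next I would pin down the stochastic event that produces the stated probability. Using a thin SVD $\bX_k=\bU_k\bD_k\bV_k\trans$ gives $\sigma(\bX_k\trans\bE,1)\le \sigma(\bX_k,1)\,\sigma(\bU_k\trans\bE,1)$, and $\bU_k\trans\bE$ is an $r(\bX_k)\times q$ matrix with i.i.d.\ $\mathrm{N}(0,\tau^2)$ entries. Its expected spectral norm is at most $\tau\{\sqrt q+\sqrt{r(\bX_k)}\}$, and Gaussian concentration of the $1$-Lipschitz map $\bE\mapsto\sigma(\bU_k\trans\bE,1)$ gives a sub-Gaussian tail; choosing the deviation $t=\theta\tau\sqrt{q+r(\bX_k)}$ and using $\sqrt{q+r(\bX_k)}\le\sqrt q+\sqrt{r(\bX_k)}$ yields $\sigma(\bX_k\trans\bE,1)\le \tau(1+\theta)\sigma(\bX_k,1)\{\sqrt q+\sqrt{r(\bX_k)}\}=\tfrac{\lambda}{2}\,n\,w_k$ with probability at least $1-\exp[-\theta^2\{q+r(\bX_k)\}/2]$. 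A union bound over $k$ gives exactly the probability in the statement, and on this event the noise term is dominated by $\tfrac{\lambda}{2}\sum_k w_k\|\Delta_k\|_\star$.

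On the good event I would then invoke decomposability of the nuclear norm view by view. Letting $\bB_{0k}''$ be the residual of the best rank-$r_k$ approximation of $\bB_{0k}$ (so $\|\bB_{0k}''\|_\star=\sum_{j=r_k+1}^{m_k}\sigma(\bB_{0k},j)$) and splitting $\Delta_k=\Delta_k'+\Delta_k''$ with $\Delta_k'$ of rank at most $2r_k$, the standard bound $\|\bB_{0k}\|_\star-\|\widehat{\bB}_k\|_\star\le\|\Delta_k'\|_\star-\|\Delta_k''\|_\star+2\|\bB_{0k}''\|_\star$ reduces the basic inequality to $\tfrac{1}{2n}\|\bX\Delta\|_\Fn^2\le\lambda\sum_k w_k\bigl(\tfrac32\|\Delta_k'\|_\star-\tfrac12\|\Delta_k''\|_\star+2\|\bB_{0k}''\|_\star\bigr)$. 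Nonnegativity of the left side yields the relaxed cone $\sum_k w_k\|\Delta_k''\|_\star\le 3\sum_k w_k\|\Delta_k'\|_\star+4\sum_k w_k\|\bB_{0k}''\|_\star$, which is exactly what places $\Delta$ in the restricted set $\mathcal{C}(r_1,\ldots,r_K;\delta)$; the RE condition then gives $\tfrac{1}{2n}\|\bX\Delta\|_\Fn^2\ge\kappa(\bX)\|\Delta\|_\Fn^2$.

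Finally I would convert this into a Frobenius-norm error bound. Using $\|\Delta_k'\|_\star\le\sqrt{2r_k}\,\|\Delta_k\|_\Fn$ and Cauchy--Schwarz across the views gives $\sum_k w_k\|\Delta_k'\|_\star\le\sqrt{2\sum_k w_k^2 r_k}\,\|\Delta\|_\Fn$; this aggregation is what turns a square-of-a-sum into the per-view sum $\sum_k\Lambda(\bZ_k,1)\{\sqrt q+\sqrt{r(\bX_k)}\}^2 r_k/n$ appearing in the theorem, and is essential for sharpness. Combined with RE this produces a scalar quadratic inequality $\kappa(\bX)\|\Delta\|_\Fn^2\le b\|\Delta\|_\Fn+c$, whose solution gives $\|\Delta\|_\Fn^2\preceq\max\{(b/\kappa(\bX))^2,\ c/\kappa(\bX)\}$; substituting $w_k$, $\lambda=2(1+\theta)\tau$ and $\sigma(\bX_k,1)^2/n=\Lambda(\bZ_k,1)$ reproduces the variance and bias terms verbatim. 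The $\delta^2$ term handles the complementary regime in which the tail $\sum_k w_k\|\bB_{0k}''\|_\star$ dominates the restricted part, so that $\Delta$ leaves the genuine cone and the tolerance $\delta$ built into $\mathcal{C}$ supplies the bound directly. I expect the main obstacle to be precisely the bookkeeping around $\delta$: defining the restricted set so that the relaxed cone with its additive tail still supports the RE inequality and so that the resulting three-way maximum is tight; matching the concentration constants to the weights $w_k$ is delicate but essentially mechanical by comparison.
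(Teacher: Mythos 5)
Your proposal is correct and follows essentially the same route as the paper's proof: the same basic inequality, trace-duality bound on $\sum_k\langle\bX_k\trans\bE,\Delta_k\rangle_\Fn$, concentration of $\sigma(\bX_k\trans\bE,1)$ via the projected Gaussian matrix, view-wise decomposability yielding the cone $\sum_k w_k\|\Delta_k''\|_\star\le 3\sum_k w_k\|\Delta_k'\|_\star+4\sum_k w_k\|\bB_{0k}''\|_\star$, the RE condition, and the $\sqrt{2r_k}$ plus Cauchy--Schwarz aggregation leading to the scalar quadratic inequality. The handling of $\delta$ also matches the paper's dichotomy (either $\|\Delta\|_\Fn\le\delta$ or $\Delta$ lies in the restricted set), so no gaps.
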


On the right hand side of the above upper bound, the first term is from the tolerance parameter in the RE condition, which ensures that the condition can possibly hold when the true model is not exactly low-rank \citep{negahban2011}, i.e., when $\sum_{j=r_k+1}^{m_k}\sigma(\bB_{0k},j)\neq 0$. The second term gives the \textit{estimation error} of recovering the desired view-specific low-rank structure, and the third term gives the \textit{approximation error} incurred due to approximating the true model with the view-specific low-rank structure. When the true model is exactly of low rank, i.e., $r(B_{0k}) = r_{0k}$, it suffices to take $\delta=0$ and the upper bound then yields the estimation error, i.e., $\tau^2 \sum_{k=1}^{K}\{q + r(\bX_k)\}r_{0k}/n$. This rate holds with high probability in the high-dimensional setting that $q + r(\bX_k) \rightarrow \infty$. In the classical setting of  $n\rightarrow \infty$ with fixed $q$ and $r(\bX_k)$, by choosing $\theta \propto \sqrt{\log n}$, the rate becomes $\tau^2 \log(n)\sum_{k=1}^{K}r_{0k}/n$ with probability approaching 1. 

Intriguingly, the results in Theorem \ref{th:2} can specialize into oracle inequalities of several existing regularized estimation methods, such as NNP, MTL and Lasso. This is because these models can all be viewed as special cases of iRRR. As such, iRRR seamlessly bridges group-sparse and low-rank methods and provides a unified theory of the two types of regularization. Several examples are provided in Appendix \ref{sec:th:example}.

To see the potential advantage of iRRR over NNP or MTL, we make some comparisons of their error rates based on Theorem \ref{th:2}. To convey the main message, consider the case where $p_k = p_1$, $r(\X_k) = r_{X_1}$ for $k=1,\ldots, K$, $r_{0k} = r_{01}$ for $k=1,\ldots, s$, and $r_{0k} =0$ for $k=s+1,\ldots, K$. The error rate is $\tau^2sr_{01}(q + r_{X_1})/n$, $\tau^2r_0(q + r_X)/n$, for iRRR and NNP, respectively, with high probability. As long as $sr_{01} = O(r_0)$, iRRR achieves a faster rate since $r_{X_1} \leq r_X$ always holds. For comparing iRRR and MTL, we get that with probability $1-p^{-1}$, iRRR achieves an error rate $\tau^2(\log{p} + q + r_{X_1})sr_{01}/n$ (by choosing $\theta = \sqrt{4\log{p}/(q + r_{X_1})}$ ) while MTL achieves $\tau^2(\log{p}+q+1)sp_1/n$. The two rates agree with each other in the MTL setting when $r_{X_1} = r_{01} = p_1 =1$, and the former rate can be much faster in the iRRR setting when, for example, $r_{01} \ll p_1$ and $r_{X_1} = O(\log(p)+q)$. 

\section{Simulation}\label{sec:sim}


\subsection{Settings and Evaluation Metrics}

We conduct simulation studies to demonstrate the efficacy of the proposed iRRR method. We consider two response types: Gaussian and binary.
In Gaussian settings, we compare iRRR with the ordinary least squares (OLS), the ridge RRR (RRRR) \citep{mukherjee2011reduced} (which contains RRR as a special case), and the adaptive NNP (aNNP) \cite[which has been shown to be computationally efficient and can outperform NNP in][]{chen2012ann}.
For the settings in which the true coefficient matrix is sparse, we also include MTL \citep{Caruana1997} (by treating each predictor as a group in iRRR), as well as Lasso \citep{tibshirani1996regression} and grLasso \citep{yuan2006} for each response variable separately (grLasso accounts for the grouping information in the multi-view predictors). In binary settings, we compare iRRR with the generalized RRR (gRRR) \citep{she2012reduced,LuoLiang2017} and the univariate penalized logistic regression (glmnet) with the elastic net penalty \citep{zou2005}.

For the Gaussian models, we consider a range of simulation settings.
{\bf Setting 1} is the basic setting, where $n=500$, $K=2$, $p_1 = p_2 = 50$ ($p=100$), and $q = 100$. We generate the rows of the design matrix $\bX$ independently from a $p$-variate Gaussian distribution $\mbox{N}(\0,\bSigma_x)$ with $\bSigma_x = \I_p$, followed by column centering. The error matrix $\bE$ is filled with i.i.d.\ standard Gaussian random numbers. 
(We also consider correlated errors. The results are similar and contained in Appendix \ref{sec:appendix:sim}.)
Each coefficient matrix $\bB_{0k}$ has rank $r_{0k}=10$, which is generated as $\bB_{0k}=\bL_k\bR_k\trans$ with the entries of $\bL_k \in \mathbb{R}^{p_k\times r_{0k}}$ and $\bR_k\in \mathbb{R}^{q\times r_{0k}}$ both generated from $\mbox{N}(0,1)$. Consequently, $\bB_0=(\bB\trans_{01},\bB\trans_{02})\trans$ has rank $r_0=r_{01}+r_{02}=20$. The response matrix $\Y$ is then generated based on the model in \eqref{eq:grmodel}. As such, there are more than 10,000 unknown parameters in this model, posing a challenging large-scale problem.
Furthermore, we also consider incomplete responses, with 10\%, 20\%, 30\% entries missing completely at random.

The other settings are variants of {\bf Setting 1}:
\bi
\item {\bf Setting 2 (multi-collinear):} The predictors in the two views $\bX_1$ and $\bX_2$ are highly correlated. All the $p=p_1 + p_2$ predictors are generated jointly from a $p$-variate Gaussian distribution $\mbox{N}_{p}(\0, \bSigma_x)$, where $\bSigma_x$ has diagonal elements 1 and off-diagonal 0.9.
\item {\bf Setting 3 (globally low-rank):} We set $\bR_1=\bR_2$ when generating $\bB_{01}$ and $\bB_{02}$, so that the low rank structures in separate coefficient matrices also imply a globally low-rank structure. We consider three scenarios: $r_0=r_{01}=r_{02}=20$, $r_0=r_{01}=r_{02}=40$, and $r_0=60, r_{01}=r_{02}=50$. 
\item {\bf Setting 4 (multi-set):} We consider multiple views, $K \in \{ 3, 4,5\}$. The additional design matrices and coefficient matrices are generated in the same way as in {\bf Setting 1}.
\item {\bf Setting 5 (sparse-view):} We consider $K=3$, where the last predictor set $\X_3$ is generated in the same way as in {\bf Setting 1} but is irrelevant to prediction, i.e., $\bB_{03}=0$.
\ei

For the binary models, we consider two settings: the basic setting ({\bf Setting 6}) and the sparse-view setting ({\bf Setting 7}), which are similar to
{\bf Setting 1} and {\bf Setting 5}, respectively. The differences are that the sample size is set to $n=200$, the intercept $\bmu_0$ is set as a vector of random numbers from the uniform distribution on $[-1,1]$, and the entries of $\bY$ are drawn from Bernoulli distributions with their natural parameters given by $\bTheta = \1\bmu_0\trans+ \sum_{k=1}^{K}\bX_k\bB_{0k}$.

In {\bf Settings 1--5}, we use the MSPE to evaluate the performance of different methods,
\[
\mbox{MSPE}(\bB_0,\widehat{\bB})=\tr\left\{(\bB_0-\widehat{\bB})\trans\bSigma_{x}(\bB_0-\widehat{\bB})\right\},
\]
where $\tr(\cdot)$ represents the trace of a matrix, $\widehat{\bB}$ is the estimate of $\bB_0$, and $\bSigma_{x}$ is the covariance matrix of $\bX$.
In {\bf Settings 6--7}, we evaluate the average cross entropy between the true and estimated probabilities on an independently generated validation data set of size $n=500$,
\[
\mbox{En}(\bmu_0,\bB_0,\widehat{\bmu},\widehat{\bB})=-{1\over n}\sum_{i=1}^n\sum_{j=1}^q\left\{ p_{ij}\log\widehat{p}_{ij}+(1-p_{ij})\log(1-\widehat{p}_{ij})\right\},
\]
where $p_{ij}=\exp(\theta_{ij})/\{1+\exp(\theta_{ij})\}$, and $\widehat{p}_{ij}$ is its corresponding estimate. 

For each simulation setting,
we first generate an independent testing data set to select tuning parameters for different methods.
Once selected, the tuning parameters are fixed in subsequent analyses. This unified approach alleviates inaccuracy in the empirical tuning parameter selection to ensure a fair comparison of different regularization methods. We have also tried 5-fold CV. The results are similar to those from the validation data tuning and thus omitted for brevity. In each setting, the experiment is replicated 100 times.

\subsection{Results}

Table \ref{tab:normalpred} reports the results for {\bf Settings 1--4}. In all the settings, the three regularized estimation methods always substantially outperform OLS, indicating the strength and necessity of dimension reduction. In {\bf Setting 1 (basic)}, iRRR provides the best prediction performance, followed by aNNP and RRRR.
When the outcomes are incomplete, only iRRR is applicable.
The mean and standard deviation of MSPE over 100 repetitions are 7.87 (0.20), 8.64 (0.20), and 9.96 (0.24), when 10\%, 20\%, and 30\% of the responses are missing, respectively.
In {\bf Setting 2 (multi-collinear)}, iRRR is still the best. It is worth noting that owing to shrinkage estimation, RRRR slightly outperforms aNNP. In {\bf Setting 3 (globally low-rank)}, aNNP and RRRR can slightly outperform iRRR when $r_0$ is much smaller than $\sum_{k=1}^K r_{0k}$. This can be explained by the fact that under this setting iRRR may be less parsimonious than the globally reduced-rank methods. To see this, when $r_0$ is small and $r_0=r_{01}=r_{02}$, we have that $\sum_{k=1}^K(p_k+q-r_{0k})r_{0k}=\{p+K(q-r_0)\}r_0>(p+q-r_0)r_0$, i.e., iRRR yields a larger number of free parameters than RRR. Nevertheless, iRRR regains its superiority over the globally low-rank methods when $r_0$ becomes large. We remark that in multi-view problems the scenario of $r_0 \ll \sum_kr_{0k}$ rarely happens unless the relevant subspace from each view largely overlaps with each other. In {\bf Setting 4 (multi-set)}, we confirm that the advantage of iRRR becomes more obvious as the number of distinct view sets increases.


\begin{table}[htbp]
  \centering
  \caption{Simulation results for {\bf Settings 1--4}. The mean and standard deviation (in parenthesis) of MSPE over 100 simulation runs are presented. In each setting, the best results are highlighted in boldface.}\label{tab:normalpred}
  \begin{tabular}{lrrrr}
    \hline
     & \multicolumn{1}{c}{iRRR} & \multicolumn{1}{c}{aNNP} & \multicolumn{1}{c}{RRRR} & \multicolumn{1}{c}{OLS} \\\hline
    {\bf Setting 1} & {\bf 7.22} (0.17) & 7.76 (0.22) & 8.38 (0.24) & 25.15 (0.36) \\\hline
    {\bf Setting 2} & {\bf 4.21} (0.10) & 4.69 (0.11) & 4.52 (0.11) & 25.15 (0.36) \\\hline
    \hfill ($r_0=20$) & 10.13 (0.22) & {\bf 7.81} (0.25) & 8.25 (0.26) & 25.16 (0.39) \\
    {\bf Setting 3} ($r_0=40$) & 12.48 (0.19) & {\bf 12.39} (0.22) & 13.76 (0.26) & 25.04 (0.37) \\
    \hfill ($r_0=60$) & {\bf 13.62 }(0.21) & 14.66 (0.26) & 15.66 (0.17) & 25.11 (0.39) \\\hline
    \hfill($K=3$) & {\bf 10.19} (0.21) & 13.99 (0.32) & 15.44 (0.31) & 43.76 (0.59) \\
    {\bf Setting 4} \ ($K=4$) & {\bf 13.04} (0.22) & 19.99 (0.35) & 19.68 (0.19) & 68.00 (0.89) \\
    \hfill($K=5$) & {\bf 14.84} (0.25) & 24.90 (0.32) & 21.43 (0.21) & 101.87 (1.38) \\
    \hline
  \end{tabular}
\end{table}




Figure \ref{fig:sim5} displays the results for {\bf Setting 5 (sparse-view)}. We find that the iRRR solution tuned based on predictive accuracy usually estimates the third coefficient matrix (which is a zero matrix in truth) as a nearly zero matrix and occasionally an exact zero matrix; in view of the construction of the cNNN penalty in iRRR, this ``over-selection'' property is analogous to that of Lasso or grLasso. Motivated by \cite{zou2006adaptive}, we also experiment with an adaptive iRRR (denoted by iRRR-a) approach, where we first fit iRRR and then adjust the predefined weights by the inverse of the Frobenius norms of the estimated coefficient matrices. As a result, the iRRR-a approach achieves much improved view selection performance and even better prediction accuracy than iRRR. In contrast, MTL, Lasso and grLasso have worse performance than the low-rank methods, because they fail to leverage information from the multivariate response and/or multi-view predictor structures.

\begin{figure}[ht]
  \centering
  \includegraphics[width=4in]{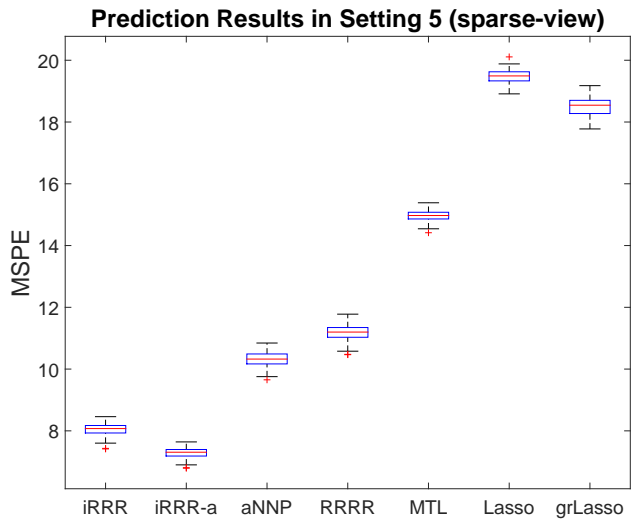}
  \caption{Simulation results for {\bf Setting 5 (sparse-view)}. OLS is omitted as its performance is much worse than the reported methods.}\label{fig:sim5}
\end{figure}


The simulation results of {\bf Settings 6-7} for binary models are displayed in Figure \ref{fig:sim6}. The results are similar as in the Gaussian models, i.e., the iRRR methods substantially outperform the competing sparse or low-rank methods in prediction. 


\begin{figure}[htp]
  \centering
  \includegraphics[width=3in]{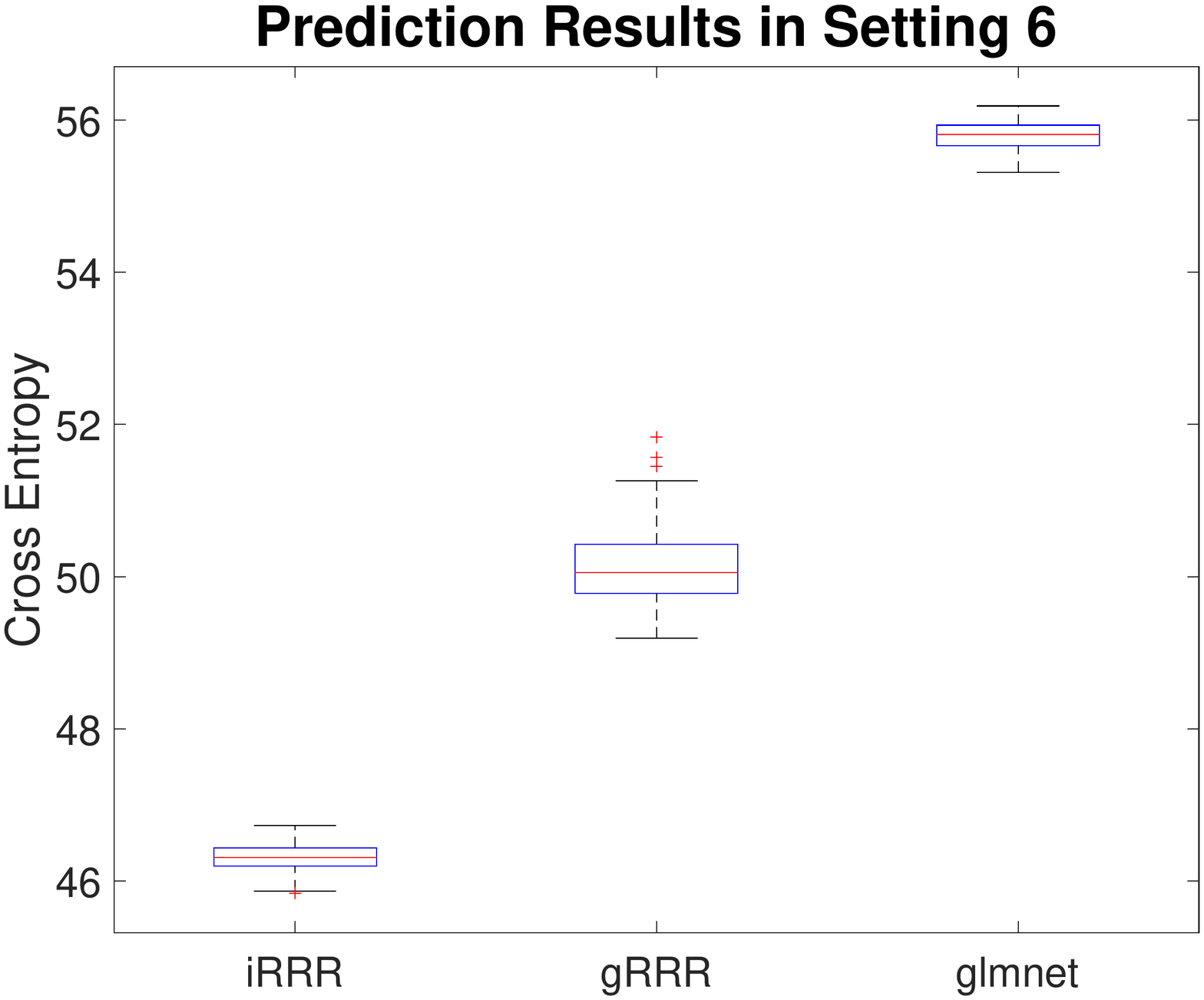}
  \includegraphics[width=3in]{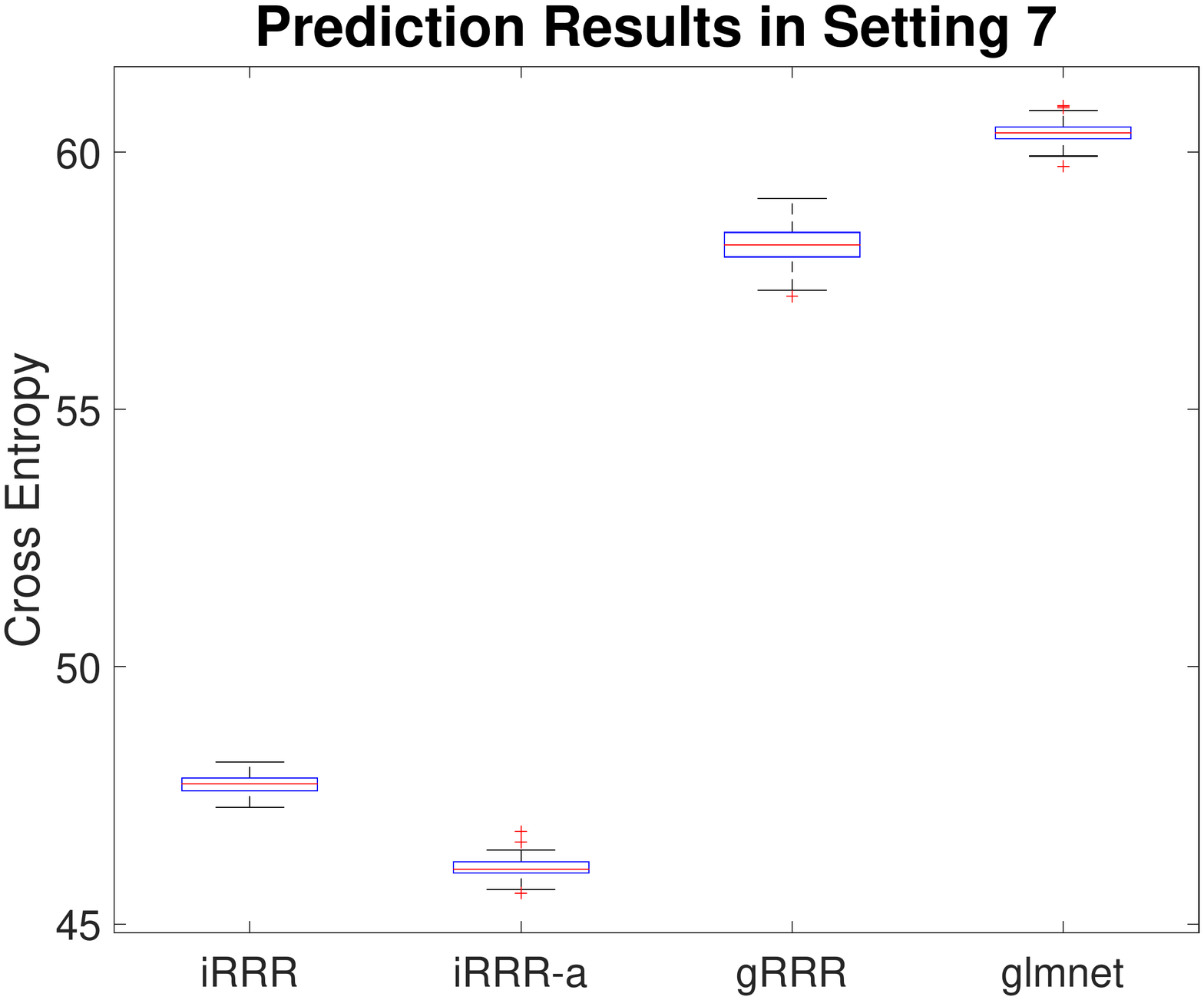}
    \caption{Simulation results for {\bf Settings 6--7} with binary response variables.}\label{fig:sim6}
\end{figure}
%

We have also compared the computational time of different methods (on a standard desktop with Intel i5 3.3GHz CPU). For example, the average time (in seconds) under {\bf Setting 1} is  $0.68$ $(0.06)$, $0.07$ $(0.01)$ and  $0.02$ $(0.00)$ for iRRR, aNNP and RRRR, respectively; under {\bf Setting 4} with $K=5$ the average time becomes $0.96$ $(0.12)$, $0.09$ $(0.01)$ and $0.05$ $(0.01)$; under {\bf Setting 6} with binary responses, the average time is $1.71$ $(0.03)$, $0.98$ $(0.08)$ and $0.70$ $(0.08)$ for iRRR, gRRR and glmnet. As expected, iRRR is more computationally expensive than the globally low-rank or sparse methods. However, in view of the scale of the problem, the computational cost for iRRR is still low and acceptable. The Matlab code for implementing the proposed method is available at \href{https://github.com/reagan0323/iRRR}{https://github.com/reagan0323/iRRR}.


\section{An Application in the Longitudinal Studies of Aging}\label{sec:real}


The LSOA \citep{stanziano2010review} was a collaborative effort of the National Center for Health Statistics and the National Institute on Aging. The study interviewed a large cohort of senior people (70 years of age and over) in 1997-1998 (WAVE II) and 1999-2000 (WAVE III), respectively, and measured their health conditions, living conditions, family situations, health service utilizations, among others. Here our objective is to examine the predictive relationship between health-related events in earlier years and health outcomes in later years, which can be formulated as a multivariate regression problem.

There are $n=3988$ common subjects who participated in both WAVE II and WAVE III interviews. After data pre-processing \citep{LuoLiang2017}, $p=294$ health risk and behavior measurements in WAVE II are treated as predictors, and $q=41$ health outcomes in WAVE III are treated as multivariate responses. The response variables are binary indicators, characterizing various cognitive, sensational, social, and life quality outcomes, among others. Over 20\% of the response data entries are missing. The predictors are multi-view, including housing condition ($\bX_1$ with $p_1=38$), family structure/status ($\bX_2$ with $p_2=60$), daily activity ($\bX_3$ with $p_3=40$), prior medical condition ($\bX_4$ with $p_4=114$), and medical procedure since last interview ($\bX_5$ with $p_5=40$). We thus apply the proposed iRRR method to perform the regression analysis. As a comparison, we also implement gRRR \citep{LuoLiang2017}, and both classical and sparse logistic regression methods using the R package \texttt{glmnet}, denoted as glm and glmnet, respectively.




We use a random-splitting procedure to evaluate the performance of different methods. More specifically, each time we randomly select $n_{tr}=3000$ subjects as training samples and the remaining $n_{te}=988$ subjects as testing samples. For each method, we use 5-fold CV on the training samples to select tuning parameters, and apply the method to all the training data with the selected tuning parameters to yield its coefficient estimate. The performance of each method is measured by the average deviance between the observed true response values and the estimated probabilities, defined as
\[
\mbox{Average Deviance} ={-2\sum_{i=1}^{n_{te}}\sum_{j=1}^q\{y_{ij}\log\widehat{p}_{ij}+(1-y_{ij})\log(1-\widehat{p}_{ij})\}\delta_{ij}\over \sum_{i=1}^{n_{te}}\sum_{j=1}^q\delta_{ij}},
\]
where $\delta_{ij}$ is an indicator of whether $y_{ij}$ is observed.
We also calculate the Area Under the Curve (AUC) of the Receiver Operating Characteristic (ROC) curve for each outcome variable. This procedure is repeated 100 times and the results are averaged.

In terms of the average deviance, iRRR and glmnet yield very similar results (with mean 0.77 and standard deviation 0.01), and both substantially outperform gRRR (with mean 0.83 and standard deviation 0.01) and glm (fails due to a few singular outcomes). The out-sample AUCs for different response variables are shown in Figure \ref{fig:LSOA}. The response variables are sorted based on their missing rates from large (over 70\%) to small (about 13\%). Again, the performance of iRRR is comparable to that of glmnet. The iRRR tends to have a slight advantage over glmnet for responses with high missing rates. This could be due to the fact that iRRR can borrow information from other responses while the univariate glmnet cannot.

\begin{figure}[ht!]
  \centering
  \includegraphics[width=4in]{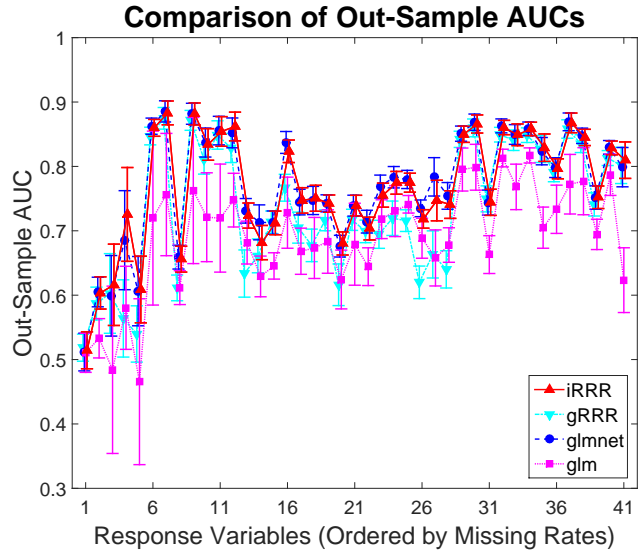}
  \caption{LSOA data analysis. The mean and standard deviation (error bar) of AUC for each response variable over 100 random-splitting procedures. The responses, from left to right, are ordered by missing rates from large to small.}\label{fig:LSOA}
\end{figure}



To understand the impact of different views on prediction, we produce heatmaps of the estimated coefficient matrices in Figure \ref{fig:heat} (glm is omitted due to its poor performance).
The estimates from iRRR and glmnet show quite similar patterns: it appears that the family structure/status group and the daily activity group have the most predictive power, and the variables within these two groups contribute to the prediction in a collective way. As for the other three views, iRRR yields heavily shrunk coefficient estimates, while glmnet yields very sparse estimates. These agreements partly explain the similarity of the two methods in their prediction performance. In contrast, the gRRR method tries to learn a globally low-rank structure rather than a view-specific structure; consequently, it yields a less parsimonious solution with less competitive prediction performance. Therefore, our results indicate that generally knowing the family structure/status and daily activity measurements, the information on housing condition, prior medical conditions, and medical procedures do not provide much new contribution to the prediction of health outcomes on cognition, sensation, social behavior, life quality, among others.



\begin{figure}[ht!]
  \centering
  \includegraphics[width=6in]{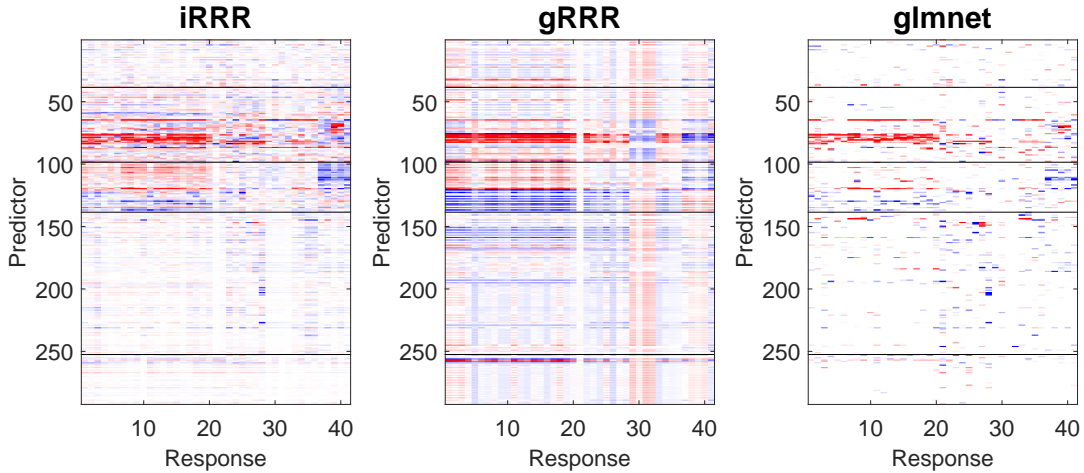}
  \caption{LSOA data analysis. The heat maps of the coefficient matrices estimated from different methods. The predictors fall into 5 groups, namely, housing condition, family status, daily activity, prior medical condition, and change in medical procedure since last interview, from top to bottom separated by horizontal black lines. For visualization purpose, we also sort the responses based on their grouping structure (e.g., cognition, sensation, social behavior, and life quality). }\label{fig:heat}
\end{figure}



\section{Discussion}\label{sec:dis}
With multi-view predictor/feature sets, it is likely that some of the views are irrelevant to the prediction of the outcomes, and the features within a relevant view may be highly correlated and hence contribute to the prediction collectively rather than sparsely. When dealing with such problem, the two commonly used methodologies, i.e., sparse methods and low-rank methods, both have shortcomings. The joint extraction of latent features from each view in a supervised fashion offers a better solution; indeed, this is what iRRR strives to achieve.

There are many directions for future research. For conducting simultaneous view selection and within-view subspace selection, the proposed cNNP scheme can be extended to a general \emph{composite singular value penalization} scheme,
$\lambda \sum_{k=1}^{K} w_k \rho_{\mathcal{O}}\left( \sum_{j=1}^{p_k\wedge q} \rho_{\mathcal{I}}\left( \sigma(\bB_{k},j) \right) \right)$,
where $\rho_{\mathcal{I}}$ is an inner penalty function for inducing sparsity among the singular values of each $\bB_{k}$, and $\rho_{\mathcal{O}}$ is an outer penalty function for enforcing sparsity among the $\bB_{k}$ matrices. For example, the family of bridge penalties \citep{huang2008} can be used in both inner and outer penalization. Incorporating sparse within-view variable selection to iRRR could also be fruitful; one way to achieve this is to use an additive penalty form of cNNP and grLasso. Moreover, it is possible to combine iRRR with a covariate-adjusted (inverse) covariance estimation method \citep{rothman2010sparse}, to jointly estimate the mean and covariance structures. Another pressing problem is to generalize iRRR to handle heterogeneous data, as in practice data may be count-valued, interval-valued, or mixed of several types with substantial missing values \citep{LuoLiang2017}. Computationally, the ADMM algorithm can be coupled with a Majorization-Minimization algorithm to handle these cases.

\if1\blind
{
\section*{Acknowledgement}

Gen Li's research was partially supported by Calderone Junior Faculty Award from the Mailman School of Public Health at Columbia University. Kun Chen's research is partially supported by National Science Foundation grants DMS-1613295 and IIS-1718798, and National Institutes of Health grants U01-HL114494 and R01-MH112148. 
}\fi


\appendix 

\section{Details on Theoretical Analysis}\label{sec:appendix:th}
\subsection{On the Restricted Eigenvalue Condition}\label{sec:th:rec}

To specify the restricted set $\mathcal{C}$, we need some additional constructions. For each $\bB_{0k} \in \mathbb{R}^{p_k\times q}$ ($k=1,\ldots,K$), let $\bB_{0k} = \bU_k\bD_k\bV_k\trans$ be its full SVD, where $\bU_k \in\mathbb{R}^{p_k\times p_k}$, $\bV_k \in \mathbb{R}^{q\times q}$ satisfy $\bU_k\trans\bU_k = \I_{p_k}$ and $\bV_k\trans\bV_k = \bI_q$. For each $r \in \{1,2,\cdots,m_k\}$, where $m_k=p_k\wedge q$, let $\bU_k^r$, $\bV_k^r$ be the submatrices of singular vectors associated with the top $r$ singular values of $\bB_{0k}$. Define the following subspaces of $\mathbb{R}^{p_k\times q}$:
\begin{eqnarray*}
\mathcal{A}(\bU_k^r,\bV_k^r) &=& \{\Delta_k \in \mathbb{R}^{p_k\times q}; \mbox{row}(\Delta_k) \subset \bV_k^r, \mbox{col}(\Delta_k)\subset \bU_k^r\},\\
\mathcal{B}(\bU_k^r,\bV_k^r) &=& \{\Delta_k \in \mathcal{R}^{p_k\times q}; \mbox{row}(\Delta_k) \perp \bV_k^r, \mbox{col}(\Delta_k)\perp \bU_k^r\},
\end{eqnarray*}
where $\mbox{row}(\Delta_k)$ and $\mbox{col}(\Delta_k)$ denote the row space and column space of $\Delta_k$, respectively. We may adopt the shorthand notation $\mathcal{A}_k^r$ and $\mathcal{B}_k^r$ when no confusion arises. Let $\mathcal{P}_{\mathcal{B}_k^{r_k}}$ denote the projection operator onto the subspace $\mathcal{B}_k^{r_k}$, and define $\Delta_k'' = \mathcal{P}_{\mathcal{B}_k^{r_k}}(\Delta_k)$ and $\Delta_k'=\Delta_k - \Delta_k''$. We now define the restricted set
\begin{align}
& \mathcal{C}(r_1,\ldots, r_K;\delta)\notag\\
 = &  \left\{
 \Delta \in \mathbb{R}^{p \times q}; \|\Delta\|_\Fn \geq \delta, \sum_{k=1}^{K}w_k\|\Delta_k''\|_\star \leq \sum_{k=1}^{K}\{ 3w_k\|\Delta_k'\|_\star + 4w_k \sum_{j=r_k+1}^{m_k}\sigma(\bB_{0k},j)\}
\right\}.\label{eq:restrictedset}
\end{align}
where $\delta$ is a tolerance parameter and $w_k=\sigma(\bX_k,1)\{\sqrt{q} + \sqrt{r(\bX_k)}\}/n$, as defined in
\eqref{eq:weight}. We refer to \citet{negahban2011} and \citet{negahban2012} for examples of the restricted set, including the cases of Lasso, grLasso and NNP.


\subsection{Special Cases of Theorem \ref{th:2}}\label{sec:th:example}

The results on iRRR in Theorem \ref{th:2} can specialize into oracle inequalities of several existing regularized estimation methods, such as NNP, MTL and Lasso. We discuss some examples below; to focus on the main message, we only focus on the settings of exact low rank or exact sparsity. First consider the NNP method defined in \eqref{eq:NNP}, which corresponds to the special case of $K=1$ and $w_k=1$ in iRRR. The restricted set in \eqref{eq:restrictedset} becomes
\begin{align*}
\mathcal{C}(r_0)=\{\Delta \in \mathbb{R}^{p\times q}; \|\Delta''\|_\star \leq 3\|\Delta'\|_\star\},
\end{align*}
where $\Delta'' = \mathcal{P}_{\mathcal{B}_0^{r_0}}(\Delta)$ and $\Delta'=\Delta - \Delta''$. Theorem \ref{th:2} then implies that under the RE condition with $\kappa(\bX)>0$ over $\mathcal{C}(r_0)$, if we choose $\lambda=2\tau(1+\theta) \sigma(\bX,1)\{\sqrt{q}+\sqrt{r(\bX)}\}$, then with probability at least
$1 - \exp[-\theta^2\{q+r(\bX)\}/2]$, it holds that
\begin{align*}
\|\widehat{\bB} - \bB_0\|_\Fn^2 \preceq \frac{\tau^2}{\kappa(\bX)^2}\frac{\{\sqrt{q}+\sqrt{r(\bX)}\}^2 r_0}{n}.
\end{align*}
This bound recovers the results on NNP in the literature; see, e.g., \citet{negahban2011}. Next, consider the MTL setting, which corresponds to $p_k=1$ and $p=K$ in iRRR. Write $\bB_0 = (\bb_{01}\trans,\ldots, \bb_{0p}\trans)\trans \in \mathbb{R}^{p\times q}$, and $\mathcal{S} = \{j; \|\bb_{0j}\|_2 \neq 0\}$. The restricted set becomes
$$
\mathcal{C}(\mathcal{S})=\left\{\Delta= (\Delta_1\trans,\ldots,\Delta_p\trans)\trans \in \mathbb{R}^{p\times q}; \sum_{k \in S^c}\|\Delta_k\|_2 \leq 3\sum_{k \in S}\|\Delta_k\|_2\right\}.
$$
By choosing 
$\lambda \propto \tau \sqrt{\log p /q}$, Theorem \ref{th:2} yields the high probability bound
\begin{align*}
\|\widehat{\bB} - \bB_0\|_\Fn^2 \preceq \frac{\tau^2}{\kappa(\bX)^2}\frac{(\log p +q)\cdot |\mathcal{S}|}{n},
\end{align*}
where $|\mathcal{S}|$ is the cardinality of $\mathcal{S}$. The same bound can be obtained from results in \citet{lounici2011} on more general setting of MTL, or from results in \citet{negahban2012} on grLasso by vectorizing the MTL problem here into a univariate-response regression. Another example is Lasso, which corresponds to $q=1$ and $K=p$ in iRRR. It is seen that the model becomes $\by = \bX\bb_{0} + \be$, and the cNNP degenerates to the $\ell_1$-norm of a coefficient vector $\bb \in \mathbb{R}^p$. Let $\mathcal{S} = \{j; b_{0j}\neq 0\}$, then the restricted set becomes
$$
\mathcal{C}(\mathcal{S})=\left\{\Delta = (\Delta_1,\ldots,\Delta_p)\trans \in \mathbb{R}^p; \sum_{k \in S^c}|\Delta_k| \leq 3\sum_{k \in S}|\Delta_k|\right\}.
$$
Theorem \ref{th:2} implies that by choosing $\lambda \propto \tau \sqrt{c\log{p}}$,
$$
\|\widehat{\bb} - \bb_0\|_2^2 \preceq \frac{\tau^2 }{\kappa(\bX)^2}\frac{\log{p}\cdot|\mathcal{S}|}{n}
$$
holds with probability at least $1-p^{1-c}$,  which is a well-known result in the literature.


\subsection{Proof of Theorem \ref{th:1} and a Corollary on the Estimation Error}\label{sec:supp:th1}

\begin{proof}[Proof of Theorem \ref{th:1}]
By definition,
\begin{align*}
& \|\bY - \bX\widehat{\bB}\|_\Fn^2 + 2\lambda \sum_{k=1}^{K}\sigma(\bX_k,1)(\sqrt{q} + \sqrt{r(\bX_k)})\|\bB_k\|_\star \\
\leq & \|\bY - \bX\bC\|_\Fn^2 + 2\lambda \sum_{k=1}^{K}\sigma(\bX_k,1)(\sqrt{q} + \sqrt{r(\bX_k)})\|\bC_k\|_\star,
\end{align*}
which leads to
\begin{align*}
\|\bX\widehat{\bB} - \bX\bB_0\|_\Fn^2
\leq & \|\bX\bC - \bX\bB_0\|_\Fn^2
+ 2\lambda \sum_{k=1}^{K}\sigma(\bX_k,1)(\sqrt{q} + \sqrt{r(\bX_k)})\|\bC_k\|_\star \\
& + 2\langle \bX\trans\bE,\widehat{\bB}-\bC \rangle_\Fn  - 2\lambda \sum_{k=1}^{K}\sigma(\bX_k,1)(\sqrt{q} + \sqrt{r(\bX_k)})\|\widehat{\bB}_k\|_\star.
\end{align*}


Define an event $\mathcal{A}_k = \{\sigma(\bX_k\trans\bE,1)\leq \lambda \sigma(\bX_k,1)(\sqrt{q}+\sqrt{r(\bX_k)})\}$, for $k=1,\ldots, K$. First, consider the inner product term. On the event $\cap_{k=1}^{K}\mathcal{A}_k$, we have
\begin{align*}
 \langle \bX\trans\bE,\widehat{\bB}-\bC \rangle_\Fn
= & \mbox{tr}\{\bE\trans\bX(\widehat{\bB}-\bC)\}\\
= & \sum_{k=1}^{K}\langle \bX_k\trans\bE,\widehat{\bB}_k - \bC_k \rangle_\Fn\\
\leq & \sum_{k=1}^{K}\sigma(\bX_k\trans\bE,1)\|\widehat{\bB}_k-\bC_k\|_\star\\
\leq & \lambda \sum_{k=1}^{K}\sigma(\bX_k,1)(\sqrt{q}+\sqrt{r(\bX_k)})\|\widehat{\bB}_k-\bC_k\|_\star.
\end{align*}
It follows that on the event $\cap_{k=1}^{K}\mathcal{A}_k$,
\begin{align}
\|\bX\widehat{\bB} - \bX\bB_0\|_\Fn^2
\leq & \|\bX\bC - \bX\bB_0\|_\Fn^2
+ 2\lambda \sum_{k=1}^{K}\sigma(\bX_k,1)(\sqrt{q} + \sqrt{r(\bX_k)})\|\bC_k\|_\star \notag\\
& + 2\lambda \sum_{k=1}^{K}\sigma(\bX_k,1)(\sqrt{q}+\sqrt{r(\bX_k)})\|\widehat{\bB}_k-\bC_k\|_\star\notag\\
& - 2\lambda \sum_{k=1}^{K}\sigma(\bX_k,1)(\sqrt{q} + \sqrt{r(\bX_k)})\|\widehat{\bB}_k\|_\star\notag\\
\leq & \|\bX\bC - \bX\bB_0\|_\Fn^2 + 4\lambda \sum_{k=1}^{K}\sigma(\bX_k,1)(\sqrt{q}+\sqrt{r(\bX_k)})\|\bC_k\|_\star,\label{eq:bound1}
\end{align}
where the last inequality is due to the triangle inequality.

Now we consider the probability of the event $\cap_{k=1}^{K}\mathcal{A}_k$. Let $\mathcal{P}$ be the projection matrix onto the column space of $\bX$, and $\mathcal{P}_k$ be the projection matrix onto the column space of $\bX_k$, for $k=1,\ldots, K$. Because $\sigma(\bX_k\trans\bE,1) \leq \sigma(\bX_k,1)\sigma(\mathcal{P}_k\bE,1)$,
we have
\begin{align*}
\cap_{k=1}^{K}\mathcal{A}_k = & \{\sigma(\bX_k\trans\bE,1)\leq \lambda \sigma(\bX_k,1)(\sqrt{q}+\sqrt{r(\bX_k)});k=1,\ldots,K\}\\
\supseteq & \{\sigma(\mathcal{P}_k\bE,1)\leq \lambda(\sqrt{q}+\sqrt{r(\bX_k)});k=1,\ldots, K \}\\
\equiv & \cap_{k=1}^{K}\tilde{\mathcal{A}}_k.
\end{align*}
By Lemma 3 in \citet{bunea2011optimal},
$$
\mathbb{P}\{(\sigma(\mathcal{P}_k\bE,1)\geq \mathbb{E}[\sigma(\mathcal{P}_k\bE,1)+\tau t]\} \leq \exp(-t^2/2),
$$
and $\mathbb{E}[\sigma(\mathcal{P}_k\bE,1)] \leq \tau (\sqrt{q} + \sqrt{r(\bX_k)})$, for any $k=1,\ldots,K$. Therefore,
$$
\mathbb{P}\{\cup_{k=1}^{K} \tilde{\mathcal{A}}_k^c\} \leq \sum_{k=1}^{K}\exp\{-\frac{1}{2}\theta^2(q+r(\bX_k))\}.
$$
It then follows that
\begin{align*}
\mathbb{P}\{\cap_{k=1}^{K}\mathcal{A}_k\}
\geq \mathbb{P}\{\cap_{k=1}^{K}\tilde{\mathcal{A}}_k\}
= 1 - \mathbb{P}\{\cup_{k=1}^{K}\tilde{\mathcal{A}}_k^c\}
\geq 1 - \sum_{k=1}^{K}\exp\{-\frac{1}{2}\theta^2(q+r(\bX_k))\}.
\end{align*}
This, together with \eqref{eq:bound1}, completes the proof.
\end{proof}

\begin{cor*}
Assume that $\bE$ has i.i.d.\ $\mbox{N}(0,\tau^2)$ entries, and assume  $\sigma(\bX, p) > 0$. Let $\lambda = (1+\theta)\tau$, with $\theta >0$ arbitrary. Then with probability at least
$1 - \sum_{k=1}^{K}\exp[-\theta^2\{q+r(\bX_k)\}/2]$,
\begin{align*}
\|\widehat{\bB} - \bB_0\|_\Fn^2 &
\preceq
\tau^2 (1+\theta)^2\sum_{k=1}^{K}\frac{\Lambda(\bZ_k,1)}{\Lambda(\bZ,p)^2} \frac{\{\sqrt{q} + \sqrt{r(\bX_k)}\}^2r_{0k}}{n},
\end{align*}
where ``$\preceq$'' means the inequality holds up to some multiplicative constant.
\end{cor*}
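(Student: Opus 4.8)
The plan is to specialize the machinery behind Theorem~\ref{th:1} to the exactly low-rank, full-column-rank regime. Since $\sigma(\bX,p)>0$, the Gram matrix $\bZ$ is positive definite, so no cone restriction is needed and I can work over the whole parameter space (the RE condition holds globally with $\kappa(\bX)=\Lambda(\bZ,p)/2>0$). I would set $\bC=\bB_0$ in the argument that proves Theorem~\ref{th:1}, so the bias term $\|\bX\bC-\bX\bB_0\|_\Fn^2$ vanishes. The subtlety is that I must re-enter that argument at the sharper intermediate inequality, the one obtained on the event $\cap_{k=1}^K\mathcal{A}_k$ just before the final triangle-inequality step, namely
\[
\|\bX\widehat{\bB}-\bX\bB_0\|_\Fn^2 \le 2\lambda\sum_{k=1}^K a_k\big(\|\bB_{0k}\|_\star + \|\Delta_k\|_\star - \|\widehat{\bB}_k\|_\star\big),
\]
where $a_k=\sigma(\bX_k,1)\{\sqrt q+\sqrt{r(\bX_k)}\}$ and $\Delta_k=\widehat{\bB}_k-\bB_{0k}$. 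Plugging $\bC=\bB_0$ into the final oracle form of Theorem~\ref{th:1} would only yield a ``slow rate'' proportional to $\|\bB_{0k}\|_\star$; retaining the $\|\widehat{\bB}_k\|_\star$ term is what lets me extract the rank $r_{0k}$ instead.

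The crux, and the step I expect to be the main obstacle, is converting this bound into a fast rate via decomposability of the nuclear norm. Using the subspaces $\mathcal{A}_k^{r_{0k}}$, $\mathcal{B}_k^{r_{0k}}$ and the splitting $\Delta_k=\Delta_k'+\Delta_k''$ with $\Delta_k''=\mathcal{P}_{\mathcal{B}_k^{r_{0k}}}(\Delta_k)$ from Appendix~\ref{sec:th:rec}, I would invoke the decomposability inequality $\|\widehat{\bB}_k\|_\star=\|\bB_{0k}+\Delta_k\|_\star \ge \|\bB_{0k}\|_\star + \|\Delta_k''\|_\star - \|\Delta_k'\|_\star$, together with $\|\Delta_k\|_\star\le\|\Delta_k'\|_\star+\|\Delta_k''\|_\star$, to collapse the parenthesis above to $2\|\Delta_k'\|_\star$. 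This gives $\|\bX\widehat{\bB}-\bX\bB_0\|_\Fn^2 \le 4\lambda\sum_k a_k\|\Delta_k'\|_\star$. Since $\Delta_k'$ lives in the span attached to the top $r_{0k}$ singular directions of $\bB_{0k}$, its rank is at most $2r_{0k}$, so $\|\Delta_k'\|_\star\le\sqrt{2r_{0k}}\,\|\Delta_k'\|_\Fn\le\sqrt{2r_{0k}}\,\|\Delta_k\|_\Fn$, and a Cauchy--Schwarz step across $k$ (using $\sum_k\|\Delta_k\|_\Fn^2=\|\Delta\|_\Fn^2$) produces $\|\bX\widehat{\bB}-\bX\bB_0\|_\Fn^2 \le 4\sqrt2\,\lambda\,\big(\sum_k a_k^2 r_{0k}\big)^{1/2}\|\Delta\|_\Fn$.

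Finally I would close the argument by converting prediction error to estimation error. Because $\sigma(\bX,p)>0$, one has $\|\bX\Delta\|_\Fn^2\ge\sigma(\bX,p)^2\|\Delta\|_\Fn^2=n\Lambda(\bZ,p)\|\Delta\|_\Fn^2$; combining this with the previous display and cancelling one factor of $\|\Delta\|_\Fn$ gives $\|\Delta\|_\Fn\le 4\sqrt2\,\lambda\,(\sum_k a_k^2 r_{0k})^{1/2}/\{n\Lambda(\bZ,p)\}$. Squaring, substituting $a_k^2=n\Lambda(\bZ_k,1)\{\sqrt q+\sqrt{r(\bX_k)}\}^2$ and $\lambda=(1+\theta)\tau$, and absorbing the numerical constant into ``$\preceq$'' reproduces the claimed bound. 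The probability statement is inherited verbatim from Theorem~\ref{th:1}, since the only random event invoked is $\cap_{k=1}^K\mathcal{A}_k$, which holds with probability at least $1-\sum_{k=1}^K\exp[-\theta^2\{q+r(\bX_k)\}/2]$.
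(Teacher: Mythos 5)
Your proposal is correct and follows essentially the same route as the paper: both start from the intermediate inequality in the proof of Theorem \ref{th:1} with the term $\|\bC_k\|_\star+\|\widehat{\bB}_k-\bC_k\|_\star-\|\widehat{\bB}_k\|_\star$ retained, bound it by a constant times $\sqrt{r_{0k}}\,\|\Delta_k\|_\Fn$ via low-rank decomposability of the nuclear norm, apply Cauchy--Schwarz across views, and convert prediction error to estimation error using $\sigma(\bX,p)>0$. The only differences are cosmetic: the paper outsources the per-block bound to the proof of Theorem 12 of \citet{bunea2011optimal}, obtaining $2\sqrt{3\,r(\bC_k)}\,\|\widehat{\bB}_k-\bC_k\|_\Fn$, whereas you derive the analogous $2\sqrt{2\,r_{0k}}\,\|\Delta_k\|_\Fn$ explicitly from the subspace decomposition of Appendix \ref{sec:th:rec}, and the paper closes with the inequality $ab\le a^2/2+b^2/2$ on the prediction scale while you cancel one factor of $\|\Delta\|_\Fn$ directly.
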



The corollary shows that the estimation error rate for iRRR is $\tau^2 \sum_{k=1}^{K}\{q + r(\bX_k)\}r_{0k}/n$. This is potentially better than $\tau^2\{q + r(\bX)\}r_0/n$, the rate achieved by the NNP estimator under the same conditions \citep{bunea2011optimal}; for example, when $r(\bX) = \sum_kr(\bX_k)$ and $r(\bB_{0}) = \sum_k r_{0k}$.

\begin{proof}[Proof of Corollary] 
From the proof of Theorem \ref{th:1},
\begin{align*}
\|\bX\widehat{\bB} - \bX\bB_0\|_\Fn^2 \leq & \|\bX\bC - \bX\bB_0\|_\Fn^2 \\
& +
2\lambda \sum_{k=1}^{K}\sigma(\bX_k,1)(\sqrt{q}+\sqrt{r(\bX_k)})\{\|\bC_k\|_\star + \|\widehat{\bB}_k - \bC_k\|_\star - \|\widehat{\bB}_k\|_\star\}.
\end{align*}
With the results in the proof of their Theorem 12 in \citet{bunea2011optimal}, we have
\begin{align*}
& \|\bX\widehat{\bB} - \bX\bB_0\|_\Fn^2
- \|\bX\bC - \bX\bB_0\|_\Fn^2 \\
\leq &
4\lambda \sum_{k=1}^{K}\sigma(\bX_k,1)(\sqrt{q}+\sqrt{r(\bX_k)})\sqrt{3 r(\bC_k)}\|\widehat{\bB}_k - \bC_k\|_\Fn\\
\leq & 4 \lambda \sqrt{3\sum_{k=1}^{K}\sigma(\bX_k,1)^2(\sqrt{q} + \sqrt{r(\bX_k)})^2r(\bC_k)} \sqrt{\sum_{k=1}^K\|\widehat{\bB}_k-\bC_k\|_\Fn^2}\\
\leq &
\frac{4\sqrt{3}\lambda }{\sigma(\bX,p)}
\sqrt{\sum_{k=1}^{K}\sigma(\bX_k,1)^2(\sqrt{q} + \sqrt{r(\bX_k)})^2r(\bC_k)}\|\bX\widehat{\bB} - \bX\bC\|_\Fn\\
\leq & \frac{1}{2}\|\bX\widehat{\bB} - \bX\bC\|_\Fn^2 +  \frac{24\lambda^2}{\sigma(\bX,p)^2}(\sum_{k=1}^{K}\sigma(\bX_k,1)^2(\sqrt{q} + \sqrt{r(\bX_k)})^2r(\bC_k)).
\end{align*}
It follows that
\begin{align*}
\|\bX\widehat{\bB} - \bX\bB_0\|_\Fn^2
\leq &  3\|\bX\bC - \bX\bB_0\|_\Fn^2
 + \frac{48\lambda^2}{\sigma(\bX,p)^2}(\sum_{k=1}^{K}\sigma(\bX_k,1)^2(\sqrt{q} + \sqrt{r(\bX_k)})^2r(\bC_k)).
\end{align*}
Taking $\bC = \bB_0$ leads to the bound for the prediction error
\begin{align*}
\|\bX\widehat{\bB} - \bX\bB_0\|_\Fn^2 &
\preceq
\tau^2 \sum_{k=1}^{K}\frac{\Lambda(\bZ_k,1)}{\Lambda(\bZ,p)} (\sqrt{q} + \sqrt{r(\bX_k)})^2r_{0k}.
\end{align*}
Then by using the fact that $\sigma(\bX,p) > 0$ we get the claimed bound.

\end{proof}

\subsection{Proof of Theorem \ref{th:2}}\label{sec:supp:th2}

\begin{proof}[Proof of Theorem \ref{th:2}]
By definition, we have
\begin{align*}
\frac{1}{2n}\|\bY-\bX\widehat{\bB}\|_\Fn^2+\lambda\sum_{k=1}^{K}w_k\|\widehat{\bB}_k\|_\star \leq \frac{1}{2n}\|\bY-\bX\bB_0\|_\Fn^2+\lambda\sum_{k=1}^{K}w_k\|\bB_{0k}\|_\star,
\end{align*}
which leads to
\begin{align}
\frac{1}{2n}\|\bX\Delta\|_\Fn^2 \leq \lambda\sum_{k=1}^{K}w_k(\|\bB_{0k}\|_\star-\|\widehat{\bB}_k\|_\star)+\frac{1}{n}\langle \bE, \bX\Delta \rangle_\Fn,\label{eq:app1}
\end{align}
where $\Delta=\widehat{\bB}-\bB_0$.

Firstly, we verify that $\Delta$ belongs to the restricted set defined in \eqref{eq:restrictedset} so that the RE condition can be applied. Consider the first term on the right hand side of \eqref{eq:app1}. With the projection operators defined in Appendix \ref{sec:th:rec}, we have that
\begin{align*}
 \|\mathcal{P}_{\mathcal{A}_k^{r_k}}(\bB_{0k})+\Delta_k''\|_\star = \|\mathcal{P}_{\mathcal{A}_k^{r_k}}(\bB_{0k})\|_\star+\|\Delta_k''\|_\star,
 \end{align*}
and
 \begin{align*}
 \|\widehat{\bB}_k\|_\star & = \|\mathcal{P}_{\mathcal{A}_k^{r_k}}(\bB_{0k})+\Delta_k''+\mathcal{P}_{\mathcal{B}_k^{r_k}}(\bB_{0k})+\Delta_k'\| \\
& \geq \|\mathcal{P}_{\mathcal{A}_k^{r_k}}(\bB_{0k})+\Delta_k''\|_\star-\|\mathcal{P}_{\mathcal{B}_k^{r_k}}(\bB_{0k})+\Delta_k'\|_\star \\
& = \|\mathcal{P}_{\mathcal{A}_k^{r_k}}(\bB_{0k})\|_\star+\|\Delta_k''\|_\star-\|\mathcal{P}_{\mathcal{B}_k^{r_k}}(\bB_{0k})\|_\star-\|\Delta_k'\|_\star.
\end{align*}
Therefore,
\begin{align}
\nonumber \|\bB_{0k}\|_\star-\|\widehat{\bB}_k\|_\star & \leq \|\mathcal{P}_{\mathcal{A}_k^{r_k}}(\bB_{0k})\|_\star+\|\mathcal{P}_{\mathcal{B}_k^{r_k}}(\bB_{0k})\|_\star \\
\nonumber & - (\|\mathcal{P}_{\mathcal{A}_k^{r_k}}(\bB_{0k})\|_\star+\|\Delta_k''\|_\star-\|\mathcal{P}_{\mathcal{B}_k^{r_k}}(\bB_{0k})\|_\star-\|\Delta_k'\|_\star)\\
& = 2\|\mathcal{P}_{\mathcal{B}_k^{r_k}}(\bB_{0k})\|_\star+\|\Delta_k'\|_\star-\|\Delta_k''\|_\star.\label{eq:bound2}
\end{align}
We then deal with the second term on the right hand side of \eqref{eq:app1}. We have
\begin{align}
\nonumber \langle \bE, \bX\Delta \rangle_\Fn & = \mbox{tr}(\bE\trans\bX\Delta)\\
\nonumber & = \sum_{k=1}^{K} \langle \bX_k\trans\bE, \Delta_k \rangle_\Fn \\ & \leq \sum_{k=1}^{K}\sigma(\bX_k\trans\bE,1)\|\Delta_k\|_\star. \label{eq:bound3}
\end{align}
Combining results in \eqref{eq:bound2} and \eqref{eq:bound3}, we get
\begin{align*}
\frac{1}{2n}\|\bX\Delta\|_\Fn^2 \leq \lambda\sum_{k=1}^{K}w_k(2\|\mathcal{P}_{\mathcal{B}_k^{r_k}}(\bB_{0k})\|_\star+\|\Delta_k'\|_\star-\|\Delta_k''\|_\star)+\frac{1}{n}\sum_{k=1}^{K}\sigma(\bX_k\trans\bE,1)\|\Delta_k\|_\star.
\end{align*}
Define an event $\mathcal{A}_k = \{\sigma(\bX_k\trans\bE,1)/n \leq \lambda w_k /(1+ \eta)\}$, for $k=1,\ldots, K$, where $\eta > 0$ is an arbitrary positive number. It follows that on the event $\cap_{k=1}^{K}\mathcal{A}_k$,
\begin{align*}
0 \leq \frac{1}{2n}\|\bX\Delta\|_\Fn^2 & \leq \lambda\sum_{k=1}^{K}w_k(2\|\mathcal{P}_{\mathcal{B}_k^{r_k}}(\bB_{0k})\|_\star+\|\Delta_k'\|_\star-\|\Delta_k''\|_\star) \\
& +\frac{\lambda}{1+\eta}\sum_{k=1}^{K}w_k\|\Delta_k\|_\star\\
& \leq \lambda\sum_{k=1}^{K}w_k(2\|\mathcal{P}_{\mathcal{B}_k^{r_k}}(\bB_{0k})\|_\star+\frac{2+\eta}{1+\eta}\|\Delta_k'\|_\star-\frac{\eta}{1+\eta}\|\Delta_k''\|_\star)
\end{align*}
Therefore, it holds that
\begin{align}
\sum_{k=1}^{K}w_k\|\Delta_k''\|_\star \leq \frac{2+2\eta}{\eta}\sum_{k=1}^{K}w_k\sum_{j=r_k+1}^{m_k}\sigma(\bB_{0k},j)
 +\frac{2+\eta}{\eta}\sum_{k=1}^{K}w_k\|\Delta_k'\|_\star.\label{eq:ap3}
\end{align}
Taking $\eta = 1$ and assuming $\|\Delta\|_\Fn \geq \delta$, we see that $\Delta \in \mathcal{C}(r_1,\cdots, r_k,\delta)$. Therefore, based on the RE condition,
\begin{align}
\kappa(\bX)\|\Delta\|_\Fn^2 \leq \frac{1}{2n}\|\bX\Delta\|_\Fn^2.\label{eq:ap2}
\end{align}


From \eqref{eq:bound3} and on the event $\cap_{k=1}^{K}\mathcal{A}_k$, we have
\begin{align}
\frac{1}{2n}\|\bX\Delta\|_\Fn^2 & \leq \lambda\sum_{k=1}^{K}w_k(\|\bB_{0k}\|_\star-\|\widehat{\bB}_k\|_\star)+\frac{1}{n} \langle \bE, \bX\Delta \rangle_\Fn \notag\\
& \leq \lambda \sum_{k=1}^{K}w_k\|\Delta_k\|_\star+\frac{\lambda}{1+\eta}\sum_{k=1}^{K}w_k\|\Delta_k\|_\star \notag\\
& \leq \frac{2+\eta}{1+\eta}\lambda\sum_{k=1}^{K}w_k\|\Delta_k\|_\star.\label{eq:ap4}
\end{align}
From \eqref{eq:ap3}, we have
\begin{align}
\sum_{k=1}^{K}w_k\|\Delta_k\|_\star & \leq \sum_{k=1}^{K}w_k\|\Delta_k'\|_\star+\sum_{k=1}^{K}w_k\|\Delta_k''\|_\star \notag\\
& \leq \frac{2+2\eta}{\eta}\left(\sum_{k=1}^{K}w_k\sum_{j=r_k+1}^{m_k}\sigma(\bB_{0k},j)+\sum_{k=1}^{K}w_k\|\Delta_k'\|_\star\right) \notag\\
& \leq \frac{2+2\eta}{\eta}\left(\sum_{k=1}^{K}w_k\sum_{j=r_k+1}^{m_k}\sigma(\bB_{0k},j)+\sum_{k=1}^{K}\sqrt{2r_k}w_k\|\Delta_k'\|_\Fn\right). \label{eq:ap5}
\end{align}
The last inequality is due to the fact that $\|\Delta\|_\Fn=\|\Delta'\|_\Fn+\|\Delta''\|_\Fn$.

Now, combining \eqref{eq:ap2}, \eqref{eq:ap4} and \eqref{eq:ap5}, we know that on the event $\cap_{k=1}^{K}\mathcal{A}_k$, either $\|\Delta\|_{\Fn}\leq \delta$, or
\begin{align*}
\kappa(\X)\|\Delta\|_{\Fn}^2 & \leq
\frac{2(2+\eta)}{\eta}\lambda\left(\sum_{k=1}^{K}w_k\sum_{j=r_k+1}^{m_k}\sigma(\bB_{0k},j)+\sum_{k=1}^{K}\sqrt{2r_k}w_k\|\Delta_k\|_\Fn\right)\\
& \leq
\frac{2(2+\eta)}{\eta}\lambda\left(\sum_{k=1}^{K}w_k\sum_{j=r_k+1}^{m_k}\sigma(\bB_{0k},j)
+\|\Delta\|_\Fn\sqrt{2\sum_{k=1}^{K}r_kw_k^2} \right).
\end{align*}
That is,
\begin{align*}
\|\Delta\|_\Fn^2 \preceq \max\left\{\delta^2, \frac{\lambda^2\sum_{k=1}^{K}r_k w_k^2}{\kappa^2(\bX)}, \frac{\lambda\sum_{k=1}^{K}w_k\sum_{j=r_k+1}^{m_k}\sigma(\bB_{0k},j)}{\kappa(\bX)}\right\}.
\end{align*}
Lastly, from the proof of Theorem \ref{th:1}, choosing $\lambda=2(1+\theta) \tau$ ensures that
\begin{align*}
\mathbb{P}\{\cap_{k=1}^{K}\mathcal{A}_k\}
\geq 1 - \sum_{k=1}^{K}\exp\{-\frac{1}{2}\theta^2(q+r(\bX_k))\}.
\end{align*}
This completes the proof.

\end{proof}

\section{Additional Simulation with Correlated Errors}\label{sec:appendix:sim}
We conduct additional simulation studies where the errors in $\bE$ are correlated. In particular, we consider an AR(1) covariance structure with common variance 1 and autocorrelation 0.5 for the random errors in $\bE$ in Settings 1--5. The same methods are used and the results are shown in Table \ref{tab:AR1} and Figure \ref{fig:AR1}. The results are very similar to those with i.i.d.\ errors.
A closer look reveals that the proposed iRRR method is very robust against the violation of the independent error assumption, while other methods (especially MTL and grLasso) are more sensitive. 

\begin{table}[bh]
  \centering
  \caption{Simulation results for {\bf Settings 1--4} with correlated errors. The mean and standard deviation (in parenthesis) of MSPE over 100 simulation runs are presented. In each setting, the best results are highlighted in boldface.}\label{tab:AR1}
  \begin{tabular}{lrrrr}
    \hline
     & \multicolumn{1}{c}{iRRR} & \multicolumn{1}{c}{aNNP} & \multicolumn{1}{c}{RRRR} & \multicolumn{1}{c}{OLS} \\\hline
    {\bf Setting 1} & {\bf 7.74} (0.22) & 9.11 (0.32) & 10.27 (0.43) & 25.14 (0.58) \\\hline
    {\bf Setting 2} & {\bf 4.62} (0.10) & 5.63 (0.18) & 5.35 (0.14) & 25.14 (0.58) \\\hline
    \hfill ($r_0=20$) & 10.73 (0.26) & {\bf 9.10} (0.33) & 10.06 (0.45) & 25.17 (0.60) \\
    {\bf Setting 3} ($r_0=40$) & {\bf 13.10} (0.24) & 14.09 (0.33) & 15.08 (0.17) & 25.11 (0.60) \\
    \hfill ($r_0=60$) & {\bf 14.40} (0.23) & 16.43 (0.38) & 15.70 (0.16) & 25.16 (0.52) \\\hline
    \hfill($K=3$) & {\bf 11.03} (0.26) & 17.11 (0.48) & 17.63 (0.24) & 43.87 (0.84) \\
    {\bf Setting 4} \ ($K=4$) & {\bf 14.12} (0.25) & 25.33 (0.64) & 20.97 (0.22) & 68.06 (1.29) \\
    \hfill($K=5$) & {\bf 16.09} (0.29) & 30.78 (0.40) & 23.01 (0.22) & 101.81 (1.45) \\
    \hline
  \end{tabular}
\end{table}

\begin{figure}[bh]
  \centering
  \includegraphics[width=4in]{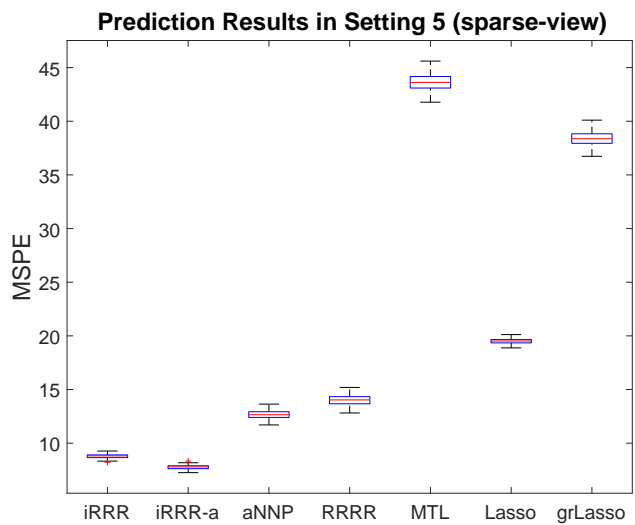}
  \caption{Simulation results for {\bf Setting 5 (sparse-view)} with correlated errors.}\label{fig:AR1}
\end{figure}

\clearpage
\bibliographystyle{chicago}


\end{document}